\newcommand{\bra}[1]{\left\langle #1 \right\rvert}
\newcommand{\ket}[1]{ \left\lvert #1\right\rangle}
\newcommand{\braket}[2]{\left\langle #1 \middle\vert #2 \right\rangle}
\newcommand{\ketbra}[2]{\left\lvert #1 \rangle \! \langle #2 \right\rvert}
\newcommand{\kb}[2]{\left\lvert #1 \rangle \! \langle #2 \right\rvert}
\newcommand{\set}[1]{\left\lbrace #1 \right\rbrace}
\newcommand{\norm}[1]{\left\lVert #1 \right\rVert}
\newcommand{\abs}[1]{\left\lvert #1 \right\rvert}
\newcommand{\1}{\mathbbm{1}}
\newcommand{\tr}[1]{\mathrm{tr}\left( #1 \right)}
\newcommand{\partr}[2]{\mathrm{tr}_{#1}\left( #2 \right)}	% partial trace
\def\submission{}	% comment for local operation, uncomment for submission
  \newcommand{\graphicsPath}{./}
  \newcommand{\graphicsPath}{./graphics/}
\def\showMain{}	% if not commented, show only main part
\def\showSM{}	% if not commented, show only SM part
\theoremstyle{plain}% default
\newtheorem{thm}{Theorem}
\newtheorem{lem}[thm]{Lemma}
\theoremstyle{definition}
\theoremstyle{remark}
\newcommand{\bbC}{\ensuremath{\mathbb{C}}}
\newcommand{\bbF}{\ensuremath{\mathbb{F}}}
\newcommand{\bbR}{\ensuremath{\mathbb{R}}}
\definecolor{orange}{rgb}{1,0.5,0}
\definecolor{darkgreen}{rgb}{0.0, 0.578125, 0.25}
\begin{document}

\title{Measurement-Disturbance Tradeoff Outperforming Optimal Cloning}
\centerline{}
\centerline{}
\author{Lukas Knips}
\affiliation{Max-Planck-Institut f\"{u}r Quantenoptik, Hans-Kopfermann-Stra{\ss}e 1, 85748 Garching, Germany}
\affiliation{Department f\"{u}r Physik, Ludwig-Maximilians-Universit\"{a}t, 80797 M\"{u}nchen, Germany}
\author{Jan Dziewior}
\affiliation{Max-Planck-Institut f\"{u}r Quantenoptik, Hans-Kopfermann-Stra{\ss}e 1, 85748 Garching, Germany}
\affiliation{Department f\"{u}r Physik, Ludwig-Maximilians-Universit\"{a}t, 80797 M\"{u}nchen, Germany}
\author{Anna-Lena K. Hashagen}
\affiliation{Fakult\"{a}t f\"{u}r Mathematik, Technische Universit\"{a}t M\"{u}nchen, Germany}
\author{Jasmin D. A. Meinecke}
\affiliation{Max-Planck-Institut f\"{u}r Quantenoptik, Hans-Kopfermann-Stra{\ss}e 1, 85748 Garching, Germany}
\affiliation{Department f\"{u}r Physik, Ludwig-Maximilians-Universit\"{a}t, 80797 M\"{u}nchen, Germany}
\author{Harald Weinfurter}
\affiliation{Max-Planck-Institut f\"{u}r Quantenoptik, Hans-Kopfermann-Stra{\ss}e 1, 85748 Garching, Germany}
\affiliation{Department f\"{u}r Physik, Ludwig-Maximilians-Universit\"{a}t, 80797 M\"{u}nchen, Germany}
\author{Michael M. Wolf}
\affiliation{Fakult\"{a}t f\"{u}r Mathematik, Technische Universit\"{a}t M\"{u}nchen, Germany}

\ifdefined\showMain

\begin{abstract}
One of the characteristic features of quantum mechanics is that every measurement that extracts information about a general quantum system necessarily causes an unavoidable disturbance to the state of this system. 
%Intuitively, an efficient strategy to obtain reliable information about a quantum state avoiding unnecessarily perturbations is the asymmetric universal quantum cloner.
%
A plethora of different approaches has been developed to characterize and optimize this tradeoff. 
%
%One of the two approximate clones, obtained from this strategy, can be used for performing a measurement, while the second clone remains unaffected by the measurement procedure.
% We not only investigate the optimal tradeoff, but apply the framework of quantum instruments to derive a class of procedures, which is optimal with respect to most meaningful measures.
Here, we apply the framework of quantum instruments to investigate the optimal tradeoff and to derive a class of procedures that is optimal with respect to most meaningful measures.
%We theoretically derive and experimentally evaluate the tradeoff relation between measurement error and state disturbance for 
We focus our analysis on binary measurements on qubits as commonly used in communication and computation protocols and demonstrate theoretically and in an experiment that the optimal universal asymmetric quantum cloner, albeit ideal for cloning, is not an optimal procedure for measurements and can be outperformed with high significance.  % $\bigotimes_{j=1}^{N} c_j, c_j\in\{a,e,m,u,r,s\}$
\end{abstract}

\maketitle

% \section{Introduction}
%------------------------
\textit{Introduction.---}The work of Heisenberg, best visualized by the Heisenberg microscope~\cite{Heisenberg1930}, teaches us that every measurement is accompanied by a fundamental disturbance of a quantum system. 
The question about the precise relation between the information gained about the quantum system and the resulting disturbance has since inspired numerous studies~\cite{Jaeger1995,Englert1996,Ozawa2003,Branciard2013,Busch2013,Banaszek2001,Fuchs1996,Maccone2006,Fuchs1996,Maccone2006,Buscemi2008,Buscemi2014,Zhang2016c,DAriano2003,DAriano2003,Jordan2010,Cheong2012,NielsenGold,Kretschmann2008,Fan2015,Shitara2016}. % [ZITATE???]. 
A central problem is to find a tight, quantitative tradeoff relation, e.g., for the maximally achievable information for a given disturbance or, vice versa, for the minimal disturbance for a certain amount of extracted information. 
Obviously, this is not only relevant for quantum foundations, but also for many applications in quantum communication~\cite{Gisin2002,Pan2012} and quantum computation~\cite{Ekert1996,Vedral1998,Steane1998}. 
Initially studied in the context of which-path information and loss of visibility in interferometers~\cite{Jaeger1995,Englert1996}, quantifying the information-disturbance tradeoff was based on various measures such as the traditional root mean squared distance~\cite{Ozawa2003,Branciard2013}, the distance of probability distributions \cite{Busch2013}, operation and estimation fidelities~\cite{Banaszek2001,Fuchs1996,Maccone2006}, entropic quantities~\cite{Fuchs1996,Maccone2006,Buscemi2008,Buscemi2014,Zhang2016c,DAriano2003}, reversibility~\cite{DAriano2003,Jordan2010,Cheong2012}, stabilized operator norms~\cite{NielsenGold,Kretschmann2008}, state discrimination probability~\cite{Buscemi2008}, probability distribution fidelity~\cite{Fan2015}, and Fisher information~\cite{Shitara2016}. 
In spite of all these distinct approaches, no clear candidate for a most fundamental framework for the analysis of the information-disturbance tradeoff in quantum mechanics has yet emerged. 

Here we build upon a novel, comprehensive information-disturbance relation introduced recently by two of us~\cite{Hashagen_Wolf_2018}. 
There, optimal measurement devices have been proven to be independent of the chosen quality measures, as long as these fulfill some reasonable assumptions, such as convexity and basis-independence. 
This approach is unique with respect to the employment of reference observables.
On one hand, since information eventually is obtained via measurements of observables, we base the quantification of the measurement error on a reference observable.
On the other hand, the measurement induced disturbance is defined without relying on any reference observable in order not to restrict the further usage of the post-measurement state.
% This approach is unique as it combines the practical relevance of a reference observable for quantifying the measurement error, with the generality by defining disturbance not relying on any reference\todo{solve this paradox...}. %for the definition of disturbance, which does not rely on any reference. 
For a finite-dimensional von Neumann measurement, the optimal tradeoff can be achieved with quantum instruments described by at most two parameters.

\begin{figure}[ht]
\includegraphics[width=0.45\textwidth]{\graphicsPath 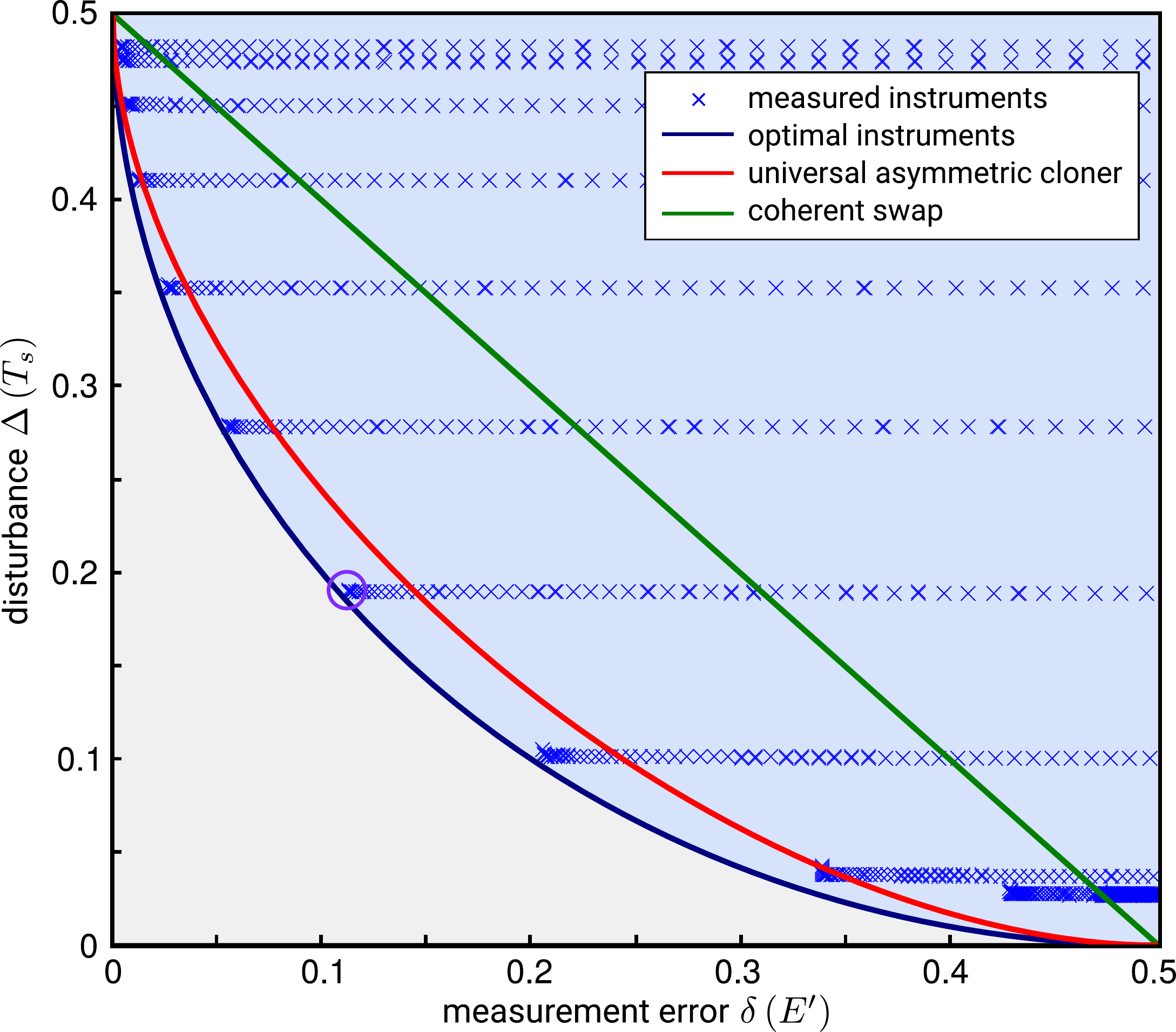}
\caption{{The optimal quantum instruments in terms of measurement error and disturbance} clearly outperform the optimal asymmetric cloner (red curve) and the coherent swap operation (green line).
Our measurements (blue crosses) come close to the theoretical curve (blue curve).
The violet marked instrument is discussed in Fig.~\ref{fig:dataplot} in more detail.
The error bars are too small to be visible; for a detailed discussion see~\cite{SM}.
}
\label{fig:dataplotInstruments}
\end{figure}

In this letter, we describe how optimal instruments can be derived for typical measures of measurement error, i.e., inverse information, and state disturbance and how they can be implemented in an experiment. 
Typically, quantum cloning is considered to be a good choice to achieve an optimal measurement disturbance tradeoff. %for optimal measurements. 
Yet, here we show that the optimal instruments outperform all (asymmetric) quantum cloners~\cite{SM}. 
We test the tradeoff relation experimentally using a tunable Mach-Zehnder-Interferometer and implement a large range of quantum instruments.
We apply these instruments to a two-dimensional quantum system encoded in the photon polarization and investigate the relation between the error of the measurement and the disturbance of the qubit state. %applied to a qubit encoded in photon polarization. 
% For each of those instruments we investigate the error of the measurement for some disturbance it induces for the measured quantum system. 
As distance measures we consider exemplarily some of the measures recommended in \cite{NielsenGold}, i.e., the worst-case total variational distance and the worst-case trace norm. %, which fulfill the basic requirements of~\cite{Hashagen_Wolf_2018}.
For other measures see supplemental material (SM)~\cite{SM}. 
The experiment clearly shows that the optimal universal asymmetric cloner as well as the coherent swap scheme are suboptimal (Fig.~\ref{fig:dataplotInstruments}).

\textit{Measurements as quantum instruments.---}To generally quantify both the measurement error and the measurement induced disturbance, we describe the measurement of observables on a quantum system by means of quantum instruments~\cite{Davies1970,Watrous2018} as illustrated in Fig.~\ref{fig:setup}.
Formally, a quantum instrument $I$ is defined as a set of completely positive linear maps $I:=\{ I_j \}_{j=1}^m$ that fulfills the normalization condition $\sum_{j=1}^m I_j^\ast(\mathbbm{1}) =\mathbbm{1}$, where $I_j^\ast$ denotes the dual map to $I_j$ with respect to the Hilbert-Schmidt inner product.
This description naturally encompasses the connection between the observable given by a positive operator valued measure (POVM) $E^\prime:=\{E_j^\prime\}_{j=1}^m$ and the quantum channel $T_s$, which describes the measurement induced change of the state.

In general, a quantum channel is a completely positive trace preserving linear map. 
In the context of quantum instruments, the channel is given by the sum of the linear maps with $T_s:=\sum_{j=1}^m I_j$, where each map corresponds to one measurement operator $E^\prime_j$ of the POVM.
The normalization condition of the quantum instrument ensures that the corresponding quantum channel is trace-preserving.
Expressing the channel in terms of $I$ as above reflects the decohering effect of the measurement on the quantum state of the measured system.

The measurement operators $\{E^\prime_j \}_{j=1}^m$ themselves are fully determined by $I$ via $E^\prime_j:=I_j^\ast(\mathbbm{1})$, where the probability distribution for outcomes $\{ j \}_{j=1}^{m}$ on state $\rho$ is given by $ \tr{I_j(\rho)} = \tr{I_j(\rho)\mathbbm{1}} = \tr{\rho I_j^\ast(\mathbbm{1})} = \tr{\rho E_j^\prime}$.
From this point of view, the normalization condition of the quantum instrument ensures that the distribution $\{\tr{E^\prime_j \rho} \}_{j=1}^m$ is normalized.
The instrument description based on the normalized set of maps $I$, which implies the pair $(E^\prime, T_s)$, is sufficient to exhaustively describe all possible quantum measurement processes.

\begin{figure}[ht]
	\centering
	  \begin{tikzpicture}[thick, every text node part/.style={align=center}, scale=0.8]
		\draw (0,0) rectangle (2,2);
		\node (in) [left] at (-2,1) {$\rho$};
		\node (I) at (1,1) {$I$};
		\node (out1) [right] at (4,1.5) {$T_s(\rho)$};
		\node (out2) [right] at (4,0.5) {$\{\tr{E^\prime_j \rho} \}_{j=1}^m$};
		\node (into) [left] at (0,1) {};
		\node (outof1) [right] at (2,1.5) {};
		\node (outof2) [right] at (2,0.5) {};
		\draw [->] (in) -- (into);
		\draw [->] (outof1) -- (out1);
		\draw [->, dashed] (outof2) -- (out2);
		\end{tikzpicture}
\caption{
{General description of a measurement using a quantum instrument $I$.}
Obtaining information about the quantum state via the POVM $E^\prime$ (dashed line, classical output) induces a change of the quantum state described by the quantum channel $T_s$ (solid line, quantum output).}
\label{fig:setup}
\end{figure}
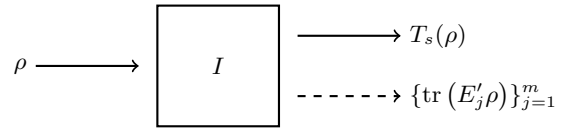

% \COMMENT{Anna: Ich glaube wir brauchen hier noch eine mini section ueber unser setup.}
% \textit{Measurement scheme.---}In order to analyze the measurement-disturbance tradeoff, we consider this general setting of a measurement using quantum instruments. 
% \COMMENT{Wir ueberarbeiten diese Section nochmal nochmal...}
% We fix a target measurement $E$ and we analyze the tradeoff between the quality of a measurement $E^\prime$ approximating this target measurement and the disturbance the measurement process necessarily introduces to the system. 
% The evolution of the system is described by some quantum channel $T_s$. 
% Obviously, $E^\prime$ and $T_s$ are not independent. 
% The setup is naturally described by a quantum instrument yielding any pair $(E^\prime, T_s)$ compatible with the rules of quantum theory.
% \COMMENT{...bis hier}

%-----------------------------------------
\textit{Distance measures.---}From the notion of quantum instruments it becomes immediately clear that $E^\prime$ and $T_s$  are not independent, i.e. the change of the state has a fundamental dependence on the information gained and vice versa. 
To enable a thorough quantitative analysis of this measurement-disturbance tradeoff, we use distance measures to assess the quality of the approximate measurement and to quantify the disturbance.
We quantify the disturbance $\Delta$ caused to the system by the deviation of the channel $T_s$ from the identity channel $T_{\rm id}\left(\rho\right):=\rho$. %\id$.
The measurement error $\delta$ quantifies the deviation of the measurement $E^\prime$ from a reference measurement $E$.
This approach utilizes a reference POVM $E$ to quantify the measurement error, but not the disturbance, in contrast to all other approaches found in the literature, where either a reference system is used for both, measurement error and disturbance, or none is used at all.

The measurement error $\delta$ can be quantified by defining a worst-case total variational distance based on the $l_1$-distance between probability distributions.
The $l_1$-distance, also called total variational distance, displays the largest possible difference between the probabilities that two probability distributions assign to the same event and therefore is the relevant distance measure for hypothesis testing~\cite{Neyman1933,Watrous2018}.
In our case, these two probability distributions stem from the target measurement $E$ and the actual measurement $E^\prime$ for some quantum state.
To generalize the measure for the measurement error to take into account all possible quantum states $\rho$ of the system we additionally take the worst case w.r.t. all states, which is natural when considering the maximal difference, i.e., worst-case characteristic of the $l_1$-distance itself.
Thus our worst-case total variational distance is defined as 
%One strong candidate to quantify the measurement error is the worst-case $l_1$-distance, also called worst-case total variational distance. 
%The measurement error $\delta$ can be quantified using the worst-case $l_1$-distance, also called worst-case total variational distance. 
% The total variational distance 
% It 
%The total variational distance displays the largest possible difference between the probabilities that two probability distributions assign to the same event and therefore is the relevant distance measure for hypothesis testing~\cite{Neyman1933,Watrous2018}. 
%In our case, these two probability distributions stem from the target measurement $E$ and the actual measurement $E^\prime$. 
%Due to its worst-case interpretation, it is natural to consider the worst case w.r.t. all quantum states, i.e., the worst-case total variational distance defined as
\begin{equation}
\delta(E^\prime) := \sup_{\rho} \frac{1}{2} \sum_{i=1}^2 \abs{\tr{E^\prime_i \rho} - \tr{E_i \rho}}. %\bra{i} \rho \ket{i}}. 
\label{eq:MeasError1}
\end{equation}

The quantum analogue of the worst-case total variational distance is the worst-case trace norm distance, which we thus use to quantify the distance between the quantum channel $T_s$ and the identity channel $T_{\rm id}$, 
\begin{equation}
\Delta(T_s) := \frac{1}{2} \sup_{\rho} \norm{T_s(\rho) -\rho}_1.
\label{eq:Dist}
\end{equation}
This disturbance measure quantifies how well the quantum channel $T_s$ can be distinguished from the identity channel $T_{\rm id}$ in a statistical experiment, if no auxiliary systems are allowed
\footnote{
Allowing auxiliary systems, the relevant disturbance measure is the diamond norm,
% \begin{equation}
$\Delta_\diamond(T_s) := \frac{1}{2} \sup_{\xi} \norm{\left(\left(T_s - T_{{\rm id},d} \right) \otimes T_{{\rm id},d} \right)(\xi) }_1$,
% \label{eq:Diamond}
% \end{equation} 
where the state $\xi$ includes auxiliary systems.
Here, for the optimal tradeoff curve, the trace norm turns out to be equal to the diamond norm distance \cite{Hashagen_Wolf_2018}.
}.

%-----------------------------------------------
\textit{Optimal instruments and tradeoff.---}As reference measurement, we choose the ideal projective measurement of the qubit with $E = \set{\ketbra{j}{j}}^2_{j=1}$. 
As proven in \cite{Hashagen_Wolf_2018} for the optimal quantum instruments each element $I_j$ can be expressed by a single Kraus operator, agreeing with the intuition that additional Kraus operators introduce noise to the system.
% Thus, in the case of a qubit the optimal the quantum channel $T_s$ and the POVM $E^\prime := \{E^\prime_j\}_{j=1}^2$ are
In the case of a qubit this leads to %the quantum channel $T_s$ and the POVM $E^\prime := \{E^\prime_j\}_{j=1}^2$
\begin{equation}
T_s(\rho) = \sum_{j=1}^2 K_j \rho K_j^\dagger \quad \text{ and } \quad \{E^\prime_j = K_j^\dagger K_j\}_{j=1}^2.
\label{eq:optimalKrausdecomposition}
\end{equation}

The Kraus operators of an optimal instrument can be chosen diagonal in the basis $\set{\ket{j}}_{j=1}^2$ given by the target measurement~\cite{Hashagen_Wolf_2018}. 
Since for a qubit there are only two of them and they must satisfy the normalization condition, in general their form is
% \begin{equation}
% \label{eq:generaldiagonalKraus}
% K_j = \sqrt{1-b^2_{\bar{j}}} \ketbra{j}{j}  +  b_j (\1 - \ketbra{j}{j}), \qquad j=1,2,
% \end{equation}
\begin{subequations}\label{eq:generaldiagonalKraus}
\begin{align}
 K_1 = \sqrt{1-b^2_2} \ketbra{1}{1}  +  e^{i\beta_1} b_1 \ketbra{2}{2}, \\
 K_2 = b_2 \ketbra{1}{1} + e^{i\beta_2}\sqrt{1-b^2_1} \ketbra{2}{2} , 
\end{align}
\end{subequations}
with $0 \leq b_1^2,b_2^2 \leq 1$ and two arbitrary phases $\beta_1$ and $\beta_2$. 
% with $0 \leq b_j^2,b_{\bar{j}}^2 \leq 1$, where $\bar{j} = 2$ if $j=1$ and $\bar{j}=1$ if $j=2$.

% As proven in appendix~\ref{sec:proofOptimalTradeoff}, 
As proven in \cite{SM}, for such an instrument, the worst-case total variational distance $\delta$ and its trace-norm analogue $\Delta$, Eqs.~(\ref{eq:MeasError1},\ref{eq:Dist}), quantifying measurement error and disturbance respectively, satisfy
\begin{equation}
\Delta \geq 
\begin{cases}
\frac{1}{2}\left( \sqrt{1-\delta} - \sqrt{\delta} \right)^2  & \text{if } \delta \leq \frac{1}{2}, \\
0 & \text{if } \delta \geq \frac{1}{2}.
\end{cases}
\label{eq:Tradeoff}
\end{equation}
The inequality is tight and cannot be exceeded by any quantum measurement procedure. 
Equality in Eq.~(\ref{eq:Tradeoff}) is attained for the family of optimal instruments defined by 
%$I_j(\rho) := K_j\rho K_j$, 
%with 
\begin{subequations}\label{eq:optimalKraus}
\begin{align}
% K_1 = \sqrt{1-\gamma} \ketbra{1}{1}  +  \sqrt{\gamma} \ketbra{2}{2},  \\
% K_2 = \sqrt{\gamma} \ketbra{1}{1}  +  \sqrt{1-\gamma} \ketbra{2}{2}, \\
K_1 = \frac{1}{\sqrt{2}} \left(\sqrt{1-\gamma} \ketbra{1}{1}  +  \sqrt{1+\gamma} \ketbra{2}{2}\right),  \\
K_2 =  \frac{1}{\sqrt{2}} \left(\sqrt{1+\gamma} \ketbra{1}{1}  +  \sqrt{1-\gamma} \ketbra{2}{2}\right), 
\end{align}
\end{subequations}
with $\gamma \in [0,1]$, leading to $\delta(\gamma)=\left(1-\gamma\right)/2$.

%-------------------------------
\textit{Other known measurement schemes.---}Let us evaluate common quantum measurement procedures in terms of their measurement-disturbance tradeoff. %these optimal instruments with 
For perfect quantum cloning, there would be no measurement-disturbance tradeoff, as one of the perfect clones could be measured without error with the other clone staying undisturbed.
Although perfect cloning is impossible~\cite{Wootters1982}, one can derive a protocol that is optimal for approximate quantum cloning.
Hence, it is a manifest intuition that the optimal universal asymmetric quantum cloner provides a promising measurement protocol that naturally leads simultaneously to a small disturbance and a small measurement error.
It is illustrated in Fig.~\ref{fig:cloning}.
%The quantum channels are $T_s(\rho) = \partr{s^\prime}{T_{\text{clo}}(\rho)}$ and $T_{s^\prime}(\rho) = \partr{s}{T_{\text{clo}}(\rho)}$, obtained by tracing out the respective other clone of the cloning channel $T_{\text{clo}}$.
% The corresponding channel of the second clone, $T_{s^\prime}(\rho) = \partr{s}{T_{\text{clo}}(\rho)}$, provides an approximate copy to which a POVM $E$ can be applied.
%The target measurement $E$ is performed on one of the clones, while the other clone is compared to the initial quantum state.
% The target measurement $E$ is performed on one of the clones, while the other clone is compared to the initial quantum state.
The quantum channel $T_s(\rho) = \partr{s^\prime}{T_{\text{clo}}(\rho)}$, a marginal of the cloning channel $T_{\text{clo}}$, corresponds to the evolution of the system state, obtained when tracing out the second (primed) clone.
The corresponding channel of the second clone, $T_{s^\prime}(\rho) = \partr{s}{T_{\text{clo}}(\rho)}$, provides an approximate copy to which the reference POVM $E$ is applied.
Asymmetry within the quality of the clones determines the tradeoff between the measurement error and the disturbance. 

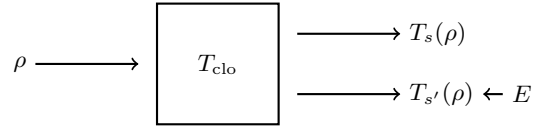
\begin{figure}[ht]
	\centering
	  \begin{tikzpicture}[thick, every text node part/.style={align=center}, scale=0.8]
		\draw (0,0) rectangle (2,2);
 		\node (in) [left] at (-2,1) {$\rho$};
		\node (T) at (1,1) {$T_{\text{clo}}$};
		\node (out1) [right] at (4,1.5) {$T_s(\rho)$};
		\node (out2) [right] at (4,0.5) {$T_{s^\prime}(\rho)$};
		\node (meas) at (6,0.5) {$E$};
 		\node (into) [left] at (0,1) {};
		\node (outof1) [right] at (2,1.5) {};
		\node (outof2) [right] at (2,0.5) {};
% 		\draw (0,0) rectangle (0.75,0.75);
% 		\node (aux) at (0.375,0.375) {$\1/2$};		
 		\draw [->] (in) -- (into);
		\draw [->] (outof1) -- (out1);
		\draw [->] (outof2) -- (out2);
		\draw [<-] (out2) -- (meas);
		\end{tikzpicture}
\caption{{Universal asymmetric quantum cloning.} The initial quantum state
$\rho$ is asymmetrically, approximately cloned to the auxiliary system, initially in state $\1/2$.
The target measurement is performed on one of the clones, while the other is compared to the initial quantum state $\rho$.}
\label{fig:cloning}
\end{figure}

The optimal universal asymmetric quantum cloning channel $T_\text{clo}$ for any initial quantum state $\rho$ reads~\cite{Hashagen_2017}%~\cite[Theorem 3]{Hashagen_2017}
\begin{equation}
T_{\text{clo}}\left(\rho\right)= \left(a_2 \1 +a_1 \bbF \right)\left( \rho \otimes \frac{\1}{2}\right) \left(a_2 \1 +a_1 \bbF\right),
\label{eq:CloningChannel}
\end{equation}
with $a^2_1+a^2_2+a_1a_2 = 1$, $a_1, a_2 \in \bbR$, and the flip (or swap) operator $\bbF:= \sum_{i,j=1}^2 \ketbra{ji}{ij}$.
The parameter $a_1$ determines the amplitude of a swap operation between both qubits. %~\cite{Scarani2005}
% ~\footnote{Please note that the cloning operation on two qubits is not a unitary transformation and as such relies on additional auxiliary systems.}.
%For a physical implementation, a third qubit is needed, enabling the implementation using a unitary transformation.

%The worst-case total variational distance $\delta$ and its trace-norm analogue $\Delta$, with target measurement $E=\{ \ketbra{j}{j} \}_{j=1}^2$ performed on one of the clones of the asymmetric quantum cloning channel, satisfy
With our measures, the measurement-disturbance tradeoff for the asymmetric quantum cloning channel satisfies
\begin{equation}
\Delta =
\begin{cases}
\frac{1}{4}\left( \sqrt{2-3\delta} - \sqrt{\delta} \right)^2  & \text{if } \delta \leq \frac{1}{2}, \\
0 & \text{if } \delta \geq \frac{1}{2}
\end{cases}
\label{eq:CloningTradeoff}
\end{equation}
% which is proven in appendix~\ref{sec:derivationCloning}.
with $\delta(a_2)=a_2^2/2$~\cite{SM}.

%%%%%%%%%%%%%%%%%%%%%%%%%%%%%%
% \textit{Coherent swap.---}
% One could regard the coherent swap as another promising strategy for a good tradeoff.
As the cloning operation cannot be realized by a unitary two-qubit transformation, any real implementation of the protocol is embedded in a larger system.
Let us thus consider an obvious analogue to the cloning operation, which can be realized by a unitary two-qubit operation.
% Contrary, the coherent swap channel $T_{\text{cs}}$ , where 
% As in the quantum cloning protocol, an auxiliary system (initially in some state $\tilde{\rho}$) is added. 
For the swapping channel $T_{\text{cs}}$, the system interacts with the auxiliary system via a Heisenberg Hamiltonian as
\begin{align}
T_{\text{cs}}\left( \rho \right) &= e^{it\bbF} \left( \rho \otimes \tilde{\rho} \right) e^{-it\bbF} \nonumber \\
&= \left( a_2 \1 + i a_1 \bbF \right) \left( \rho \otimes \tilde{\rho} \right) \left( a_2 \1 - i a_1 \bbF \right),
\label{eq:SwapChannel}
\end{align}
with $t \in [0,\pi/2]$ or using a parametrization analogous to the cloning scheme with $a^2_1 + a^2_2 = 1$, $a_1, a_2 \in \bbR$.
The extreme cases are no swap ($t=0$, $a_2 = 1$) and full swap ($t=\pi/2$, $a_1 = 1$).
% As the transformation in Eq.~(\ref{eq:SwapChannel}) is unitary, the implementation does not need a third system.

The $\delta$-$\Delta$-tradeoff for the target measurement $E=\{ \ketbra{j}{j} \}_{j=1}^2$ performed on one of the outputs satisfies
\begin{equation}
\Delta = \frac{1}{2} - \delta,
\label{eq:CohSwapTradeoff}
\end{equation}
with $\delta(t)=(1-a_1^2)/2$, 
for the coherent swap~\cite{SM}, evidently also inferior to our optimal instruments, Eq.~(\ref{eq:optimalKraus}), with the tradeoff given in Eq.~(\ref{eq:Tradeoff}).%, as derived in the appendix~\ref{sec:derivationSwap}.

\begin{figure}[t]
\includegraphics[width=0.47\textwidth]{\graphicsPath 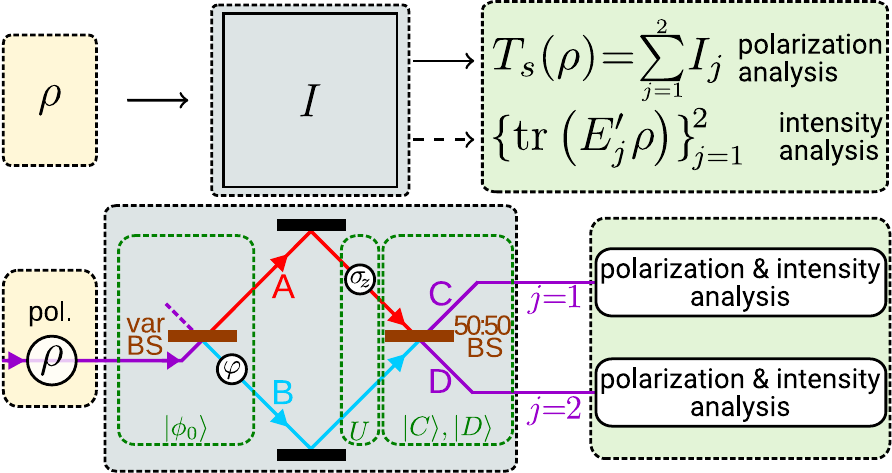}
\caption{{Conceptual experimental setup.} 
The state $\rho$ is encoded in the polarization degree of freedom of a photon, which is sent to a variable beam splitter (var BS).
% Light from a diode laser (LD) propagates through a single mode fiber and is sent through a fixed polarizer (H-POL).
% A beam splitter (BS) creates a spatial superposition.
The spatial superposition state inside of the interferometer is denoted by $\ket{\phi_0}$ and can be tuned in terms of relative intensities and phase. 
% Inside of the interferometer, a spatial superposition state is created, denoted by $\ket{\phi_0}$, which can be tuned in terms of relative intensities and phase. 
% The attenuation of one arm can be adjusted using a half waveplate (HWP) in arm $A$ and another H-POL.
% The relative phase $\varphi$ can be varied using a piezo controlled prism. 
% H-POLs together with variable HWPs ensure equal polarization in both arms as indicated by the dotted lines.
For the interaction $U$ between the path and the polarization degrees of freedom we apply a $\sigma_z$ operation to the polarization in one path.
% As the H-POLs are used to vary the attenuation as well as to set the polarization state, they are part of both the instrument and the state preparation.
% The reflection from arm $A$ on the second BS introduces a coupling between polarization and path.
Projections onto the output ports $\ket{C}$ and $\ket{D}$ of a balanced $50\!\!:\!\!50$ beam splitter conclude the realization of the Kraus operators as given in Eqs.~(\ref{eq:KrausOperatorsExp}).
Polarization and intensity measurements are performed at the output ports of the interferometer.
Please note that the actual experiment, while equivalent to the shown setup, is structured differently such that the polarization state $\rho$ is created inside of the interferometer.
The actual experiment is described in more detail in~\cite{SM}.
}
\label{fig:expsetup}
\end{figure}

% For the implementation of a tunable instrument as shown in Fig.~\ref{fig:setup}, we use an optical Mach-Zehnder interferometer, see Fig.~\ref{fig:expsetup}. 
% The system qubit $\rho$ is encoded in polarization, whereas the path degree of freedom provides the auxiliary system.
\textit{Experimental implementation.---}For our experimental evaluation of the measurement-disturbance tradeoff we want to realize a broad range of quantum instruments including the optimal ones.
For that purpose we consider the polarization degree of freedom of photons to encode $\rho$, with $\ket{1}\leftrightarrow\ket{H}$ and $\ket{2}\leftrightarrow\ket{V}$, where $\ket{H}$ ($\ket{V}$) denotes horizontally (vertically) polarized light.
The Kraus operators describing the chosen set of instruments are thus given by
\begin{align} \label{eq:KrausOperators}
K_{1,2} = \frac{1}{\sqrt{2}} \Big[ \sqrt{ 1 \pm \gamma} \ketbra{H}{H} %\nonumber \\
+ e^{i\beta} \sqrt{ 1 \mp \gamma} \ketbra{V}{V} \Big]
\end{align}
with %$\gamma\in\left[0,1\right]$ and 
an arbitrary phase $\beta$.
The optimal cases Eqs.~(\ref{eq:optimalKraus}) are achieved for $\beta = 0$.

To experimentally realize a quantum instrument and to enable analysis of the two outputs $T_s$ and $E^\prime$, it is necessary to employ an additional auxiliary quantum system, which is not yet explicitly present in the instrument description of Fig.~\ref{fig:setup}.
For the measurement of photon polarization a natural candidate is the path degree of freedom of the photons.
Since in our case a two dimensional auxiliary system is sufficient, we employ a Mach-Zehnder interferometer, which provides the two path states $\ket{A}$ and $\ket{B}$, see Fig.~\ref{fig:expsetup}.
The properties of the instrument are then determined by the initial state of this auxiliary system, $\ket{\phi_0}=\cos\alpha\ket{A}+e^{i\varphi}\sin\alpha\ket{B}$, the measurement performed on it, i.e., the detection in the output path states $\ket{C}$ and $\ket{D}$, as well as by an intermediate interaction between path and polarization.
The interaction is given by a unitary evolution $U$, which exchanges information between the systems.
We use $U= i \sigma_z \otimes \kb{A}{A} + \1 \otimes \kb{B}{B}$, which introduces a polarization dependent phase shift in arm $\ket{A}$.
% Eventually, the information about the polarization is retrieved via a measurement on the path, by projecting onto the output ports $\ket{C}$ and $\ket{D}$ of the interferometer.
% The recorded intensity at each output port corresponds to the probability of the respective outcome.
% We perform state tomography on the post-measurement polarization state, which allows to quantify the disturbance.

For an initial path state $\ket{\phi_0}$ the Kraus operators, which act on the polarization, can then be obtained as
%\begin{equation}
%K_1 = \bra{C} U \ket{\phi_0}, \;\; \text{and} \; K_2 = \bra{D} U \ket{\phi_0}.
%\end{equation}
\begin{subequations}\label{eq:KrausOperatorsExp}
\begin{align}
K_1 &= \mathrm{tr}_\text{path} \left[ (\1 \otimes \kb{C}{C}) \, U \, (\1 \otimes \kb{\phi_0}{\phi_0}) \right], \\
K_2 &= \mathrm{tr}_\text{path} \left[ (\1 \otimes \kb{D}{D}) \, U \, (\1 \otimes \kb{\phi_0}{\phi_0}) \right].
\end{align}
\end{subequations}
Relating these expressions with Eq.~(\ref{eq:KrausOperators}), the parameters $\gamma$ and $\beta$ are given by the experimental parameters $\alpha$ and $\varphi$ by $\gamma = \sin \left(2\alpha\right) \sin \varphi$ and $\beta = \arctan\left[ \tan\left( 2 \alpha \right) \cos \varphi\right]$.
The outcome of the measurement $E^\prime$ is then obtained by determining the total intensity in the output $C$ ($E^\prime_1$) and  $D$ ($E^\prime_2$), respectively, the action of the quantum channel $T_s$ by state tomography of the polarization degree of freedom.

\begin{figure}[t]
\includegraphics[width=0.45\textwidth]{\graphicsPath 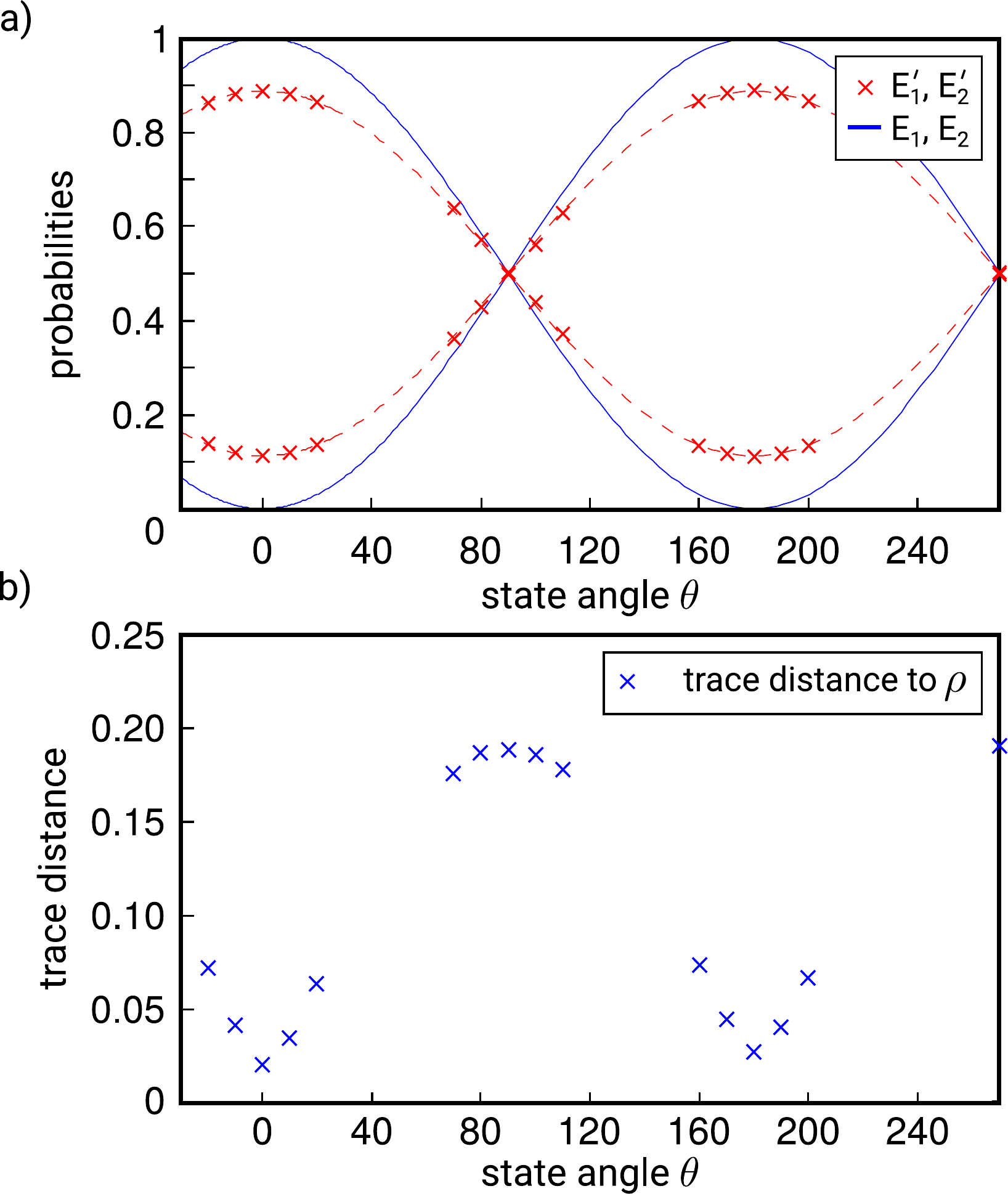}
\caption{{Evaluating measurement error $\delta$ and disturbance $\Delta$.} 
a) The measurement error corresponds to the maximal distance between the outcomes of the actual measurements $E^\prime_1$ and $E^\prime_2$ (red crosses) to the outcomes of the ideal measurements $E_1$ and $E_2$ (blue line). 
b) The disturbance is obtained by taking the supremum of the trace distance between the prepared polarization states and the tomographically reconstructed states of $T_s$. 
Please note that the suprema in a) and b) are achieved for different states. 
Statistical error bars are negligibly small.
For a detailed discussion, see ~\cite{SM}.
}
\label{fig:dataplot}
\end{figure}

\textit{Measurements and results.---}According to Eqs.~(\ref{eq:MeasError1}) and (\ref{eq:Dist}), the measures $\delta$ and $\Delta$ use the supremum over different input states $\rho$. 
We thus prepare for each quantum instrument different linearly polarized states $\rho$, which are analyzed after the interaction. 
The prepared polarization state $\rho=\ketbra{\psi}{\psi}$ in both arms is given by
$\ket{\psi} = \cos\frac{\theta}{2} \ket{H} + \sin\frac{\theta}{2} \ket{V}$,
where $\ket{H}$ and $\ket{V}$ as the eigenstates of the Pauli matrix $\sigma_z$ with eigenvalues $+1$ and $-1$, respectively, denote horizontal and vertical polarization.
We use $16$ different values for $\theta$, including those where extremal behavior for the disturbance or the measurement error is expected. 
The set of pure, linearly polarized states is sufficient as the suprema in Eqs.~(\ref{eq:MeasError1}) and (\ref{eq:Dist}) are attained in our experimental implementation, see SM~\cite{SM}. %, see appendix~\ref{sec:choiceOfStates}. 

% While the most straightforward way for a measurement 
An intuitive strategy consists of setting a specific instrument and then varying the polarization state $\rho$, which however requires to keep the instrument parameters ($\alpha$ and $\varphi$) stable.
It turns out to be experimentally more favorable to prepare different polarization states $\rho$ and then vary the phase $\varphi$ for fixed $\alpha$ and $\rho$. % as this does not require an active phase stabilization.
% However, because our interferometer is not actively phase stabilized, for each attenuation ratio (parameter $\alpha$), we prepare different polarization states $\rho$ and then vary the phase $\varphi$ for fixed $\alpha$ and $\rho$. % as this does not require an active phase stabilization.
One thus associates measurements which correspond to the same state $\ket{\phi_0}$ of the auxiliary system to the same instrument. 
%, for which the measurement error and disturbance measures of Eqs.~(\ref{eq:MeasError1}) and (\ref{eq:Dist}) are evaluated (Fig.~\ref{fig:dataplot}).

The evaluation of the measurement error and the disturbance for one instrument of Fig.~\ref{fig:dataplotInstruments} is shown in Fig.~\ref{fig:dataplot} a) and b), respectively.
The supremum over a great circle of the Bloch sphere, described by $\ket{\psi}$, has been used for the analysis. 
The measurement error is given by the maximal deviation of the measurement (red crosses) to the best fitting target measurement (blue solid line), see Eq.~(\ref{eq:MeasError1}).
While some states as eigenstates of the transformation (theoretically) do not show any disturbance, for the disturbance, the largest trace distance has to be taken into account, see Eq.~(\ref{eq:Dist}).

The obtained values for measurement error and state disturbance are shown in Fig.~\ref{fig:dataplotInstruments} for the set of experimentally prepared quantum instruments. 
Each data point here identifies one quantum instrument, for which the supremum of the prepared quantum states in terms of measurement error and disturbance is determined. 
The horizontal structure is explained when considering that for a fixed $\alpha$, various measurements with different $\varphi$ have been taken, see Eq.~\eqref{eq:KrausOperators}.
We could show that there exist quantum instruments, also experimentally accessible, which significantly outperform the optimal universal asymmetric cloner (red curve) and the coherent swap operation (green line) in terms of the considered distances.

%We analyzed the measurement-disturbance tradeoff derived in \cite{Hashagen_Wolf_2018} experimentally. 
\textit{Conclusion.---}We applied the novel approach derived in \cite{Hashagen_Wolf_2018} to the setting of binary qubit measurements achieving an optimal measurement-disturbance tradeoff.
In this setting a reference measurement is used to quantitatively obtain the measurement error. 
The disturbance, on the other hand, does not depend on any reference measurement, but solely on comparing the state before and after the measurement.
Our protocol is tailored for applications based on a specific measurement without restricting subsequent use of the post-measurement state.

% The optimal quantum instruments derived here in agreement with \cite{Hashagen_Wolf_2018} cannot be violated and we substantiate their optimality.

Furthermore, we have demonstrated that the strategies of optimal universal asymmetric quantum cloning and coherent swap do not perform optimally when considering the tradeoff relation between measurement error and disturbance.
Those protocols are optimal for their respective purposes such as approximate quantum cloning, but cannot compete with the optimal quantum instruments in the measurement scenario as in general they result in worse measurement-disturbance tradeoff relations. 
We have shown that the advantage of optimal instruments over other schemes is experimentally accessible and not only a mere theoretical improvement.
%Those findings allow future measurements to retrieve partial information from a system at the physically lowest cost in terms of state disturbance.
% n this paper we showed that the optimal quantum instruments outperform the quantum cloner significantly in the laboratory with high significance.
In future applications our findings allow to identify these procedures which retrieve information at the physically lowest cost in terms of state disturbance.

\textit{Acknowledgments.---}We thank Jonas Goeser for stimulating discussions.
This research was supported in part by the National Science Foundation under Grant No. NSF PHY11-25915 and by the German excellence initiative Nanosystems Initiative Munich.
LK and AKH are supported by the PhD program \textit{Exploring Quantum Matter} of the Elite Network of Bavaria.
JD acknowledges support by the International Max-Planck Research Program for Quantum Science and Technology (IMPRS-QST).
JDMA is supported by an LMU research fellowship.

%\bibliography{bib_infodist}
% \bibliographystyle{unsrt}

%

\else
\fi % showMain

% \appendix

\ifdefined\showSM

\newpage
\setcounter{equation}{0}
\setcounter{figure}{0}

\makeatletter
\renewcommand{\theequation}{S\@arabic\c@equation}
\renewcommand{\thefigure}{S\@arabic\c@figure}
\makeatletter

\ifdefined\showMain
\else
\title{Supplemental Material: \\ Measurement-disturbance tradeoff outperforming optimal cloning}
\maketitle
\fi % SMheader

\ifdefined\showMain
\section*{\large Supplemental Material}
\else
\fi
\section*{SM\,1: Optimal tradeoff relation} 
\label{sec:proofOptimalTradeoff}
\begin{thm}[Total variation - trace norm tradeoff]
Consider a von Neumann target measurement given by an orthonormal basis $\left\{ \ket{i} \in \bbC^2 \right\}_{i=1}^2$, and an instrument with two corresponding outcomes. 
Then the worst-case total variational distance $\delta$ and its trace-norm analogue $\Delta$, defined as in Eqs.~(\ref{eq:MeasError1},\ref{eq:Dist}), quantifying measurement error and disturbance respectively, satisfy
\begin{equation}
\Delta \geq 
\begin{cases}
\frac{1}{2}\left( \sqrt{1-\delta} - \sqrt{\delta} \right)^2  & \text{if } \delta \leq \frac{1}{2}, \\
0 & \text{if } \delta \geq \frac{1}{2}.
\end{cases}
\label{eq:TradeoffAppendix}
\end{equation}
The inequality is tight and equality is attained within the family of instruments defined by  
\begin{equation}
I_j(\rho) := K_j\rho K_j, \qquad j=1,2,
\label{eq:OptInstr}
\end{equation}
with 
\begin{equation}
% K_j = \sqrt{1-\gamma} \ketbra{j}{j}  +  \sqrt{\gamma} (\1 - \ketbra{j}{j}),
K_{1,2} = \frac{1}{\sqrt{2}} \left(\sqrt{1\pm\gamma} \ketbra{1}{1}  +  \sqrt{1\mp\gamma} \ketbra{2}{2}\right)
\end{equation}
with $\gamma \in [0,1]$. 
\label{thm:theoremOptimalTradeoff}
\end{thm}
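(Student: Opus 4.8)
The plan is to turn the tradeoff into an explicit low-dimensional optimization. Invoking the structural results of \cite{Hashagen_Wolf_2018} quoted above --- that a disturbance-minimizing instrument at fixed error can be taken to have a single Kraus operator per outcome, diagonal in the target basis --- I would restrict attention to the family Eq.~(\ref{eq:generaldiagonalKraus}), parametrized by $b_1,b_2\in[0,1]$ and phases $\beta_1,\beta_2$. Since a $\Delta$-minimizing instrument at fixed $\delta$ lies in this class, the minimum of $\Delta$ computed over the family equals the global minimum and hence lower-bounds $\Delta$ for every instrument. I would then (i) evaluate $\delta$ and $\Delta$ in closed form on this family and (ii) minimize $\Delta$ at fixed $\delta$.

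For the error, I would observe that the POVM elements $E'_1=K_1^\dagger K_1=(1-b_2^2)\ketbra{1}{1}+b_1^2\ketbra{2}{2}$ and $E'_2=\1-E'_1$ differ from $E$ only on the diagonal, so the summand in Eq.~(\ref{eq:MeasError1}) collapses to $\abs{b_1^2\rho_{22}-b_2^2\rho_{11}}$; being linear in the populations, its supremum is attained at a basis state and gives $\delta=\max(b_1^2,b_2^2)$. For the disturbance, a direct computation shows that $T_s$ leaves the diagonal of $\rho$ invariant and scales the coherence by $c:=b_1\sqrt{1-b_2^2}\,e^{-i\beta_1}+b_2\sqrt{1-b_1^2}\,e^{-i\beta_2}$, so $T_s(\rho)-\rho$ is off-diagonal with eigenvalues $\pm\abs{1-c}\,\abs{\rho_{12}}$ and $\norm{T_s(\rho)-\rho}_1=2\abs{1-c}\,\abs{\rho_{12}}$. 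Since $\abs{\rho_{12}}\le\tfrac12$ with equality on the equator, Eq.~(\ref{eq:Dist}) yields $\Delta=\tfrac12\abs{1-c}$. (The two suprema sit on disjoint state sets, consistent with Fig.~\ref{fig:dataplot}.)

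It then remains to minimize $\abs{1-c}$ at fixed $\delta$. First I would dispose of the phases: since $\abs c\le b_1\sqrt{1-b_2^2}+b_2\sqrt{1-b_1^2}\le1$ (the last step by Cauchy--Schwarz together with $s(2-s)\le1$), the attainable value of $c$ closest to $1$ lies on the positive real axis, so $\beta_1=\beta_2=0$ is optimal and $c=r_1+r_2$ with $r_1+r_2:=b_1\sqrt{1-b_2^2}+b_2\sqrt{1-b_1^2}$. Writing $b_1^2=\delta$ and $u:=b_2^2\in[0,\delta]$, the remaining task is to maximize $g(u)=\sqrt{\delta}\sqrt{1-u}+\sqrt{u}\sqrt{1-\delta}$. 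The hard and decisive part is that this is a \emph{boundary} optimum: the interior stationary point is $u=1-\delta$, which for $\delta\le\tfrac12$ lies outside $[0,\delta]$, so $g$ is increasing on the feasible interval and is maximized at $u=\delta$, i.e.\ $b_1^2=b_2^2=\delta$, with $g(\delta)=2\sqrt{\delta(1-\delta)}$.

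Substituting back gives $\Delta=\tfrac12\bigl(1-2\sqrt{\delta(1-\delta)}\bigr)=\tfrac12(\sqrt{1-\delta}-\sqrt{\delta})^2$, which is exactly Eq.~(\ref{eq:TradeoffAppendix}) for $\delta\le\tfrac12$, and the reparametrization $\gamma=1-2\delta$ identifies the minimizer with the optimal family Eq.~(\ref{eq:OptInstr}). For $\delta\ge\tfrac12$ the bound is trivial, and tightness follows from the choice $b_1^2=\delta,\ b_2^2=1-\delta$, which already yields $c=1$ and hence $\Delta=0$. Overall, the only genuinely delicate steps are justifying the reduction to real $c$ (the phase step together with $\abs c\le1$) and recognizing that the error constraint $\max(b_1^2,b_2^2)=\delta$ forces the boundary rather than the interior optimum, since it is precisely this boundary that bends the tradeoff into the square-root curve.
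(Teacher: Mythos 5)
Your proof is correct, and it follows the same overall strategy as the paper's supplemental proof: invoke the structural results of \cite{Hashagen_Wolf_2018} to restrict to single-Kraus instruments diagonal in the target basis, Eq.~(\ref{eq:generaldiagonalKraus}), compute $\delta$ and $\Delta$ in closed form on that family, eliminate the phases, and finish with a low-dimensional constrained optimization. There is, however, one substantive difference, and it is in your favor. On this family the paper evaluates the measurement error as $\delta = \frac{1}{2}(b_1^2+b_2^2)$, whereas you obtain $\delta = \max(b_1^2,b_2^2)$; yours is the correct value of the supremum in Eq.~(\ref{eq:MeasError1}). Indeed, the two summands $\abs{\tr{E_j'\rho} - \bra{j}\rho\ket{j}}$ coincide for every $\rho$ (they are plus and minus the same number $b_1^2\bra{2}\rho\ket{2} - b_2^2\bra{1}\rho\ket{1}$), so the worst case of their half-sum is attained at a basis state and equals $\max(b_1^2,b_2^2)$; the paper's expression is what one gets by taking the worst-case state separately for each outcome, i.e.\ by exchanging the supremum with the sum, and it agrees with the true error only at $b_1=b_2$. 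This slip does not invalidate the paper's final curve: since the maximum dominates the mean, the paper's feasible set $\frac{1}{2}(b_1^2+b_2^2)\leq\frac{1}{2}(1-\gamma)$ contains the correct one, and the minimizer it finds, $b_1^2=b_2^2$, lies in the correct set, so both optimizations return the same symmetric instrument and the same bound. But the discrepancy does change the character of the last step: with the correct constraint $\max(b_1^2,b_2^2)=\delta$, the problem collapses to your one-dimensional boundary maximization of $g(u)$ on $[0,\delta]$, whose interior stationary point $u=1-\delta$ is infeasible for $\delta\leq\frac{1}{2}$, whereas the paper optimizes over the disk and identifies the symmetric point. Your explicit treatment of the $\delta\geq\frac{1}{2}$ branch (choosing $b_1^2=\delta$, $b_2^2=1-\delta$ to get $c=1$ and $\Delta=0$ at error exactly $\delta$) is likewise a detail the paper leaves implicit.
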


\begin{proof}
In order to derive the information-disturbance tradeoff, we need to solve the following optimization problem: \\
For $\gamma \in [0,1]$ 
\begin{align}
\label{eq:opt1}
&\text{minimize } & & \Delta\left(T_s = \sum_{j=1}^2 I_j\right) \\
&\text{subject to } & & \delta\left(E^\prime = \left\{ I_j^\ast(\1) \right\}_{j=1}^2 \right) \leq \gamma, \nonumber \\
&&& I_j \text{ is c.p. and} \nonumber \\
&&& \sum_{j=1}^2 I_j^\ast (\1) = \1, \nonumber
\end{align}
where the last two constraints ensure that $I$ is an instrument. As discussed before, we assume that every element of the instrument can be expressed using a single Kraus operator. This agrees well with intuition, because more Kraus operators introduce more noise to the system. Furthermore, we assume that these Kraus operators can be chosen diagonal in the basis of the target measurement, $E=\{ \ketbra{j}{j} \}_{i=1}^2$, to reflect the symmetry of the optimization problem. These assumptions simplify the optimization problem significantly.
The Kraus operators given in Eq.~(\ref{eq:generaldiagonalKraus}) then yield the following POVM elements of the approximate measurement
\begin{equation}
E^\prime_j = (1 - b_{\bar{j}}^2) \ketbra{j}{j} + b^2_j(\1 -\ketbra{j}{j}),
\label{eq:EffectOp}
\end{equation}
for $j=1,2$, where $\bar{j}=2$ if $j=1$ and $\bar{j}=1$ if $j=2$ with $0 \leq b_1^2,b_2^2 \leq 1$.
The measurement error is thus given as
\begin{align*}
\delta(E^\prime) &= \sup_{\rho} \frac{1}{2}\sum_{j=1}^2 \abs{\tr{E_j' \rho} - \bra{j} \rho \ket{j}} \\
&= \sup_{\rho} \frac{1}{2}\sum_{j=1}^2 \abs{\tr{ \left( b_j^2 \1 - (b_j^2+b_{\bar{j}}^2) \ketbra{j}{j} \right) \rho} } \\
&= \sup_{\norm{\psi}=1} \frac{1}{2}\sum_{j=1}^2 \abs{ \bra{\psi} b_j^2 \1 - (b_j^2+b_{\bar{j}}^2) \ketbra{j}{j} \ket{\psi} } \\
&= \frac{1}{2}(b_1^2 + b_2^2),
\end{align*}
where the convexity of the $l_1$-norm was used. 
The disturbance follows from direct calculations,
\begin{align*}
\Delta(T_1) &= \frac{1}{2} \sup_{\rho} \norm{T_1(\rho)-\rho}_1 \\
&= \frac{1}{2} \sup_{\rho} \norm{\sum_{j=1}^2K_j\rho K_j^\dagger-\rho}_1 \\
&= \frac{1}{2}\left|1-e^{i\beta_1}b_1\sqrt{1-b_2^2} - e^{i\beta_2}b_2\sqrt{1-b_1^2} \right|.
% &\geq \frac{1}{2}\left(1-b_1\sqrt{1-b_1^2} - b_2\sqrt{1-b_2^2} \right).
\end{align*}
Without loss of generality, we may assume that $b_1,b_2\geq 0$ in the optimization problem, such that an optimum is attained for $\beta_1=\beta_2=0$.
%As $b_1,b_2\geq0$, the disturbance is clearly minimized for $\beta_1=\beta_2=0$, whereas the measurement error is independent of these phases.
%For the optimal instrument, one thus can directly deduce
%\begin{equation}
%\Delta(T_1) = \frac{1}{2}\left(1-b_1\sqrt{1-b_2^2} - b_2\sqrt{1-b_1^2} \right).
%\label{eq:optimalchannelDist}
%\end{equation}
The optimization problem given in Eq.~(\ref{eq:opt1}) therefore simplifies: \\
For $\gamma \in [0,1]$ 
\begin{align}
\label{eq:opt2}
&\text{minimize } & & \frac{1}{2}\left(1-b_1\sqrt{1-b_2^2} - b_2\sqrt{1-b_1^2} \right) \\
&\text{subject to } & & \frac{1}{2}(b_1^2 + b_2^2) \leq \frac{1}{2}\left(1-\gamma\right), \nonumber \\
&&& 0 \leq b_1,b_2 \leq 1. \nonumber 
% &&& 0 \leq b_1^2,b_2^2 \leq 1. \nonumber 
\end{align}
The global minimum is achieved at 
\begin{equation*}
b_1=b_2=
\begin{cases}
\sqrt{\frac{1}{2}} & \gamma \in \left[ -1, 0\right] \\
\sqrt{\frac{1}{2}}\sqrt{1-\gamma} & \gamma \in \left[ 0, 1\right] 
\end{cases}
\end{equation*}
and as stated in Eq.~(\ref{eq:TradeoffAppendix}).
\end{proof}

\section*{SM\,2: Tradeoff relation for optimal universal asymmetric cloning}\label{sec:derivationCloning}
\begin{thm}[Total variation - trace norm tradeoff using optimal universal asymmetric cloning]
Consider a von Neumann measurement given by an orthonormal basis in $\bbC^2$ on one of the outputs of the optimal universal $1\to 2$ asymmetric quantum cloning channel. 
Then the worst-case total variational distance $\delta$ and its trace-norm analogue $\Delta$ satisfy
\begin{equation}
\Delta =
\begin{cases}
\frac{1}{4}\left( \sqrt{2-3\delta} - \sqrt{\delta} \right)^2  & \text{if } \delta \leq \frac{1}{2}, \\
0 & \text{if } \delta \geq \frac{1}{2}.
\end{cases}
\label{eq:CloningTradeoffAppendix}
\end{equation}
\label{thm:theoremCloningTradeoff}
\end{thm}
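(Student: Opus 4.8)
The plan is to push everything through the two marginal channels of $T_{\text{clo}}$, each of which collapses to a depolarizing channel, and then read off $\delta$ and $\Delta$ before eliminating the cloning parameter. I identify the system $s$ with the first tensor factor and the primed clone $s'$ with the second.

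First I would expand $T_{\text{clo}}(\rho)$ from Eq.~(\ref{eq:CloningChannel}) using $\bbF\,(A\otimes B)\,\bbF = B\otimes A$ and evaluate the marginals with the flip identities $\partr{2}{\bbF(A\otimes B)}=BA$, $\partr{2}{(A\otimes B)\bbF}=AB$, together with their $\partr{1}{\cdot}$ analogues (factors interchanged). With $\tr{\rho}=1$ and the constraint $a_1^2+a_2^2+a_1a_2=1$ this yields two depolarizing channels, $T_s(\rho)=\eta_s\,\rho+(1-\eta_s)\frac{\1}{2}$ with $\eta_s=a_2^2+a_1a_2=1-a_1^2$, and $T_{s'}(\rho)=\eta_{s'}\,\rho+(1-\eta_{s'})\frac{\1}{2}$ with $\eta_{s'}=a_1^2+a_1a_2=1-a_2^2$.

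Next I would compute the two figures of merit. For the disturbance, $T_s(\rho)-\rho=(1-\eta_s)(\frac{\1}{2}-\rho)$, so $\norm{T_s(\rho)-\rho}_1=(1-\eta_s)\norm{\frac{\1}{2}-\rho}_1$; the supremum is attained on pure states, where $\norm{\frac{\1}{2}-\rho}_1=1$, giving $\Delta=\frac{1}{2}(1-\eta_s)=\frac{a_1^2}{2}$. For the measurement error, the outcome probabilities $\tr{(\1\otimes\ketbra{j}{j})\,T_{\text{clo}}(\rho)}$ identify the effective POVM as $E'_j=T_{s'}^\ast(\ketbra{j}{j})=\eta_{s'}\ketbra{j}{j}+(1-\eta_{s'})\frac{\1}{2}$; then $\tr{E'_j\rho}-\bra{j}\rho\ket{j}=(1-\eta_{s'})(\frac{1}{2}-\bra{j}\rho\ket{j})$, and since $\bra{1}\rho\ket{1}+\bra{2}\rho\ket{2}=1$ the worst case over $\rho$ yields $\delta=\frac{1}{2}(1-\eta_{s'})=\frac{a_2^2}{2}$, reproducing $\delta(a_2)=a_2^2/2$.

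Finally I would eliminate the parameters. Setting $a_1^2=2\Delta$, $a_2^2=2\delta$ and taking the non-negative branch $a_1a_2=2\sqrt{\Delta\delta}$ appropriate to the optimal cloner, the constraint reduces to $\Delta+\delta+\sqrt{\Delta\delta}=\frac{1}{2}$, a quadratic in $\sqrt{\Delta}$ whose non-negative root is $\sqrt{\Delta}=\frac{1}{2}(\sqrt{2-3\delta}-\sqrt{\delta})$; squaring gives the first branch of Eq.~(\ref{eq:CloningTradeoffAppendix}) for $\delta\leq\frac{1}{2}$. The endpoint $a_1=0$ sits at $(\delta,\Delta)=(\frac{1}{2},0)$, and for $\delta\geq\frac{1}{2}$ vanishing disturbance is trivially attainable, which gives the second branch. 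I expect the only delicate step to be fixing the left/right ordering in the flip-operator partial-trace identities, since an ordering slip there would corrupt both $\eta_s$ and $\eta_{s'}$; once the two depolarizing factors are correct, the pure-state optimizations and the final quadratic are routine.
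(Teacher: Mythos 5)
Your proposal is correct and follows essentially the same route as the paper's proof: both reduce the cloner to its two depolarizing marginals $T_s(\rho)=a_1^2\tfrac{\1}{2}+(1-a_1^2)\rho$ and $T_{s'}(\rho)=a_2^2\tfrac{\1}{2}+(1-a_2^2)\rho$, read off $\Delta=a_1^2/2$ and $\delta=a_2^2/2$, and eliminate the parameters via the constraint $a_1^2+a_2^2+a_1a_2=1$. The only difference is one of detail: you derive the marginals from Eq.~(\ref{eq:CloningChannel}) via the flip-operator partial-trace identities and solve the resulting quadratic explicitly, whereas the paper states the marginals directly and leaves the final substitution implicit.
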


\begin{proof}
The marginals of the optimal cloning channel are given by
\begin{equation}
T_{\text{clo},i} (\rho) = a_i^2 \frac{\1}{2} \tr{\rho} + (1-a_i^2)\rho, \ \ i=1,2,
\end{equation}
with $T_{\text{clo},1}=T_s$ and $T_{\text{clo},2}=T_{s^\prime}$.
The  marginal quantum channel $T_s$ describes the evolution of the quantum state and its distance to the identity channel $T_{\rm id}$ then quantifies the disturbance.
Similarly, the marginal $T_{s^\prime}$, whose output is measured by the target measurement $E$, describes the measurement itself through $E_j' = T_{s^\prime}^\ast(E_j)$. 
This is illustrated in Fig.~\ref{fig:cloning}.
This yields for the disturbance
\begin{align*}
\Delta(T_s) :=& \frac{1}{2} \sup_{\rho} \norm{T_s(\rho) -\rho}_1 \\
=& \frac{1}{2}\sup_{\rho} \norm{a_1^2\frac{\1}{2}-a_1^2\rho}_1 \\
=& \frac{a_1^2}{2}.
\end{align*}
The measurement error turns out to be
\begin{align*}
\delta(E^\prime) :=&  \sup_{\rho} \frac{1}{2} \sum_{j=1}^2  \abs{\tr{E_j' \rho} - \bra{j} \rho \ket{j}} \\
=&  \sup_{\rho} \frac{1}{2} \sum_{j=1}^2  \abs{\tr{T_{s^\prime}^\ast(\ketbra{j}{j}) \rho} - \bra{j} \rho \ket{j}} \\
=&  \sup_{\rho} \frac{1}{2} \sum_{j=1}^2  \abs{\tr{\ketbra{j}{j} T_{s^\prime}(\rho)} - \bra{j} \rho \ket{j}} \\
=&  \sup_{\rho} \frac{1}{2} \sum_{j=1}^2  \abs{\bra{j} a_2^2 \frac{\1}{2} - a_2^2 \rho \ket{j}} \\
=& \frac{a_2^2}{2}.
\end{align*}
Substituting this into the trace-preserving condition of the optimal universal asymmetric quantum cloning channel, we obtain the theorem~\ref{thm:theoremCloningTradeoff}. 
\end{proof}

\section*{SM\,3: Tradeoff relation for coherent swap}\label{sec:derivationSwap}

\begin{thm}[Total variation - trace norm tradeoff using the coherent swap]
Consider a von Neumman measurement given by an orthonormal basis in $\bbC^2$ on one of the outputs of a coherent swap channel. 
Then the worst-case total variational distance $\delta$ and its trace-norm analogue $\Delta$ satisfy
\begin{equation}
\Delta = \frac{1}{2} - \delta.
\label{eq:CohSwapTradeoffAppendix}
\end{equation}
\label{thm:theoremSwapTradeoff}
\end{thm}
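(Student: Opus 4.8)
The plan is to follow the template of the proof of Theorem~\ref{thm:theoremCloningTradeoff}: reduce the two-qubit swap channel to its single-qubit marginals, read off the disturbance from the marginal $T_s$ kept as the system and the measurement error from the effective POVM induced by the other marginal $T_{s'}$, and then combine the two values using the normalization $a_1^2+a_2^2=1$. Throughout I take the ancilla to be maximally mixed, $\tilde\rho=\1/2$, matching the blank state used in the cloning comparison; this choice is also what makes the tradeoff collapse to a clean linear relation.

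First I would compute the two marginals of $T_{\text{cs}}$. Expanding $(a_2\1+ia_1\bbF)(\rho\otimes\tfrac{\1}{2})(a_2\1-ia_1\bbF)$ and using the flip-operator partial-trace identities $\partr{2}{\bbF(A\otimes B)}=BA$, $\partr{2}{(A\otimes B)\bbF}=AB$, $\partr{2}{\bbF(A\otimes B)\bbF}=B\tr{A}$ (and their system-$1$ analogues), the two cross terms assemble into $ia_1a_2[\tfrac{\1}{2},\rho]$ and therefore vanish. What remains are depolarizing-type marginals
\[
T_s(\rho)=a_2^2\,\rho+a_1^2\,\tfrac{\1}{2},\qquad T_{s'}(\rho)=a_1^2\,\rho+a_2^2\,\tfrac{\1}{2},
\]
which have exactly the structure appearing in the cloning proof, with only the parameter constraint differing.

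Next I would evaluate the two quantities. From $T_s(\rho)-\rho=a_1^2(\tfrac{\1}{2}-\rho)$ I get $\Delta(T_s)=\tfrac{a_1^2}{2}\sup_\rho\norm{\tfrac{\1}{2}-\rho}_1=\tfrac{a_1^2}{2}$, the supremum being attained at any pure state, where $\tfrac{\1}{2}-\rho$ has eigenvalues $\pm\tfrac{1}{2}$. For the measurement error I pass to the effective POVM $E'_j=T_{s'}^\ast(\ketbra{j}{j})=a_1^2\ketbra{j}{j}+a_2^2\tfrac{\1}{2}$, so that $\tr{E'_j\rho}-\bra{j}\rho\ket{j}=a_2^2(\tfrac{1}{2}-\bra{j}\rho\ket{j})$; using $\bra{1}\rho\ket{1}+\bra{2}\rho\ket{2}=1$ the two terms of the $l_1$-sum coincide, giving $\delta(E')=\tfrac{a_2^2}{2}$ with the supremum attained at the basis states. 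Adding the results and invoking $a_1^2+a_2^2=1$ yields $\Delta+\delta=\tfrac{1}{2}$, which is Eq.~(\ref{eq:CohSwapTradeoffAppendix}).

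I expect the main obstacle to be purely organizational rather than conceptual: keeping the flip-operator partial traces straight and verifying that the two suprema---one a trace-norm maximization, one an $l_1$ maximization---are both attained at extreme (pure or basis) states, so that, unlike Theorem~\ref{thm:theoremOptimalTradeoff}, no separate optimization over instruments is needed. Once the marginals are in hand the identity $\Delta=\tfrac{1}{2}-\delta$ falls out of the normalization condition in the same way the cloning relation followed from $a_1^2+a_2^2+a_1a_2=1$.
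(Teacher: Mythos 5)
Your proposal is correct, and its computational core is the same as the paper's: reduce to the two marginals, obtain $\Delta=a_1^2/2$ from $T_s$ and $\delta=a_2^2/2$ from the induced POVM $E'_j=T_{s'}^{\ast}(\ketbra{j}{j})$, and combine them via $a_1^2+a_2^2=1$. The one substantive difference is how the ancilla state $\tilde\rho$ is handled. You fix $\tilde\rho=\1/2$ at the outset as a convention borrowed from the cloning setup, whereas the paper keeps $\tilde\rho$ general, expresses both $\Delta(T_s)=\tfrac{a^2}{2}\sup_\rho\norm{\tilde\rho-\rho}_1$ and $\delta(E')$ as functionals of $\tilde\rho$, and argues that $\tilde\rho=\1/2$ is the unique choice minimizing both, so the stated equality is the best tradeoff achievable by the swap scheme. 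This step is not cosmetic: the equality $\Delta=\tfrac12-\delta$ fails for any other ancilla (a pure ancilla with full swap, $a_1=1$, gives $\delta=0$ but $\Delta=1$), so strictly you prove the theorem only under the additional stipulation that the swap uses a maximally mixed ancilla; you should either state that as part of the scheme's definition or add the paper's optimality argument to cover the theorem as worded. On the other hand, your explicit flip-operator derivation of the marginals is more careful than the paper's: the paper's general-$\tilde\rho$ marginals, its Eqs.~(\ref{eq:T1swap}) and~(\ref{eq:T2swap}), silently omit the cross term $i a_1 a_2\left[\tilde\rho,\rho\right]$, which vanishes only when $\tilde\rho$ is proportional to the identity --- exactly the point your computation of the cross terms makes visible. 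So your argument is narrower in scope (it never addresses why $\1/2$ is the right ancilla) but tighter where it applies, while the paper's is broader but rests on marginal formulas that are only exact at the optimum it ends up selecting.
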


\begin{proof}
Using the substitution $a_1=a$ and $a_2 = \sqrt{1-a^2}$ with $a\in [0,1]$
yields the two marginals of the coherent swap quantum channel,
\begin{equation}
T_s(\rho) = a^2 \tilde{\rho} + (1-a^2)\rho 
\label{eq:T1swap}
\end{equation}
and
\begin{equation}
T_{s^\prime}(\rho) = (1-a^2)\tilde{\rho}+a^2 \rho.
\label{eq:T2swap}
\end{equation}
The disturbance is therefore 
\begin{align*}
\Delta(T_s) :=& \frac{1}{2} \sup_{\rho} \norm{T_s(\rho) -\rho}_1 \\
=& \frac{1}{2} a^2 \sup_{\rho} \norm{\tilde{\rho} -\rho}_1.
\end{align*}
The optimal choice for $\tilde{\rho}$ should clearly satisfy the points $(\Delta(T_s)=0,\delta(E^\prime)=1/2)$ and $(\Delta(T_s)=1/2,\delta(E^\prime)=0)$, where again $E^\prime=T_{s^\prime}^\ast(E)$.
For any such choice of $\tilde{\rho}$ the disturbance thus satisfies $\Delta(T_s) \geq a^2/2$. 
The measurement error turns out to be
\begin{align*}
\delta(E^\prime) :=&  \sup_{\rho} \frac{1}{2} \sum_{j=1}^2  \abs{\tr{E_j^\prime \rho} - \bra{j} \rho \ket{j}} \\
=& \sup_{\rho} \frac{1}{2} \sum_{j=1}^2  \abs{\tr{T_{s^\prime}^\ast(\ketbra{j}{j}) \rho} - \bra{j} \rho \ket{j}} \\
=& \sup_{\rho} \frac{1}{2} \sum_{j=1}^2  \abs{\tr{\ketbra{j}{j} T_{s^\prime}(\rho)} - \bra{j} \rho \ket{j}} \\
=&  \left(1-a^2\right)\sup_{\rho} \frac{1}{2} \sum_{j=1}^2  \abs{ \bra{j} \tilde{\rho} \ket{j} - \bra{j} \rho \ket{j} }.
\end{align*}
Thus, an optimal choice for $\tilde{\rho}$ that minimizes the disturbance and the measurement error is $\tilde{\rho} = \1/2$. 
A pure state with the same diagonal entries yields the same measurement error; it would, however, increase the disturbance caused to the system. 

The disturbance is then
\begin{equation*}
\Delta(T_s) = \frac{a^2}{2},
\end{equation*}
and the measurement error is
\begin{equation*}
\delta(E^\prime) = \frac{1}{2}\left( 1-a^2\right).
\end{equation*}
This gives the linear tradeoff curve given in theorem~\ref{thm:theoremSwapTradeoff}.
\end{proof}

\section*{SM\,4: Properties of distance measures}
\label{sec:distanceProperties}
The distance measures used throughout this manu\-script to quantify the measurement error and the disturbance, denoted by $\delta$ and $\Delta$, satisfy Assumption~1 and Assumption~2 of \cite{Hashagen_Wolf_2018} respectively. 

% \TODO{Anna: Hier müsstet ihr einmal die Beweise durchgehen und alle Beweisschritte rauswerfen, die für euch klar sind. Das ist glauber ich zZ noch etwas overkill, i.e., viel zu viele Schritte, die hoffentlich klar sein sollten.}

\begin{lem}
\label{lem:prop1}
$\delta$ as defined in Eq.~(\ref{eq:MeasError1}) satisfies the following properties:
\begin{enumerate}[(a)]
	\item $\delta(\{\ketbra{i}{i}\}_{i=1}^2) = 0$,
	\item $\delta$ is convex, 
	\item $\delta$ is permutation invariant, i.e., for every permutation $\pi$ and any measurement $M$
	\begin{equation}
	\delta \left( \{ U_\pi^\dagger M_{\pi(i)} U_\pi \}_{i=1}^2 \right) = \delta \left(\{ M_i  \}_{i=1}^2\right), \nonumber
	\end{equation}
	where $U_\pi$ is the permutation matrix that acts as $U_\pi \ket{i} = \ket{\pi(i)}$, and
	\item $\delta$ is invariant under diagonal unitaries, i.e., that for every diagonal unitary $D$ and any measurement $M$
		\begin{equation}
	\delta \left( \{ D^\dagger M_{i} D \}_{i=1}^2 \right) = \delta \left(\{ M_i  \}_{i=1}^2\right). \nonumber
	\end{equation}
\end{enumerate}
\end{lem}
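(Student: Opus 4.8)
The plan is to verify the four properties directly from the definition in Eq.~(\ref{eq:MeasError1}), in which the reference measurement is fixed to $E=\{\ketbra{i}{i}\}_{i=1}^2$. Property (a) is immediate: setting $E'_i=E_i$ makes every summand $\abs{\tr{E'_i\rho}-\tr{E_i\rho}}$ vanish for all $\rho$, so the supremum is $0$. For property (b) I would observe that, for each fixed state $\rho$, the map $E'\mapsto \frac{1}{2}\sum_{i=1}^2\abs{\tr{E'_i\rho}-\tr{E_i\rho}}$ is convex in $E'$, being a sum of absolute values of affine functionals of $E'$. Since $\delta$ is the pointwise supremum over $\rho$ of this family of convex functions, and a pointwise supremum of convex functions is convex, $\delta$ is convex on the (convex) set of POVMs.

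Properties (c) and (d) share one common idea, so I would treat them in parallel. In both cases the transformation applied to the measurement, conjugation by a permutation matrix $U_\pi$ or by a diagonal unitary $D$, leaves the reference measurement \emph{covariant}: since $U_\pi\ket{i}=\ket{\pi(i)}$ one has $U_\pi^\dagger E_{\pi(i)} U_\pi = E_i$, and since $D$ is diagonal in $\{\ket{i}\}$ one has $D^\dagger E_i D = E_i$. The strategy is then to insert this covariance into the second term of each summand, use cyclicity of the trace to move the conjugation onto $\rho$, and perform the change of variables $\sigma = U_\pi\rho U_\pi^\dagger$ (respectively $\sigma = D\rho D^\dagger$). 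Because conjugation by a unitary is a bijection of the state space, the supremum over $\rho$ equals the supremum over $\sigma$; for (c) one additionally reindexes the finite sum by $j=\pi(i)$. In each case the expression collapses back to $\delta(\{M_i\})$.

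I do not expect a genuine obstacle here; the only points requiring care are bookkeeping ones. The main thing to get right is the covariance relation $U_\pi^\dagger E_{\pi(i)} U_\pi = E_i$ for the reference measurement (which is exactly what lets the permutation be absorbed rather than producing a residual term), and, for the convexity argument, making explicit that convex combinations act effectwise on POVMs so that $E'_i\mapsto\tr{E'_i\rho}$ is genuinely affine. Everything else is a direct computation.
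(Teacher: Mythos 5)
Your proposal is correct and follows essentially the same route as the paper's proof: direct verification of (a), convexity via affineness of $E'_i \mapsto \tr{E'_i \rho}$ together with the fact that a supremum of convex functions is convex (the paper spells this out as a triangle-inequality-plus-sup-subadditivity computation), and for (c) and (d) the same change of variables $\sigma = U_\pi \rho\, U_\pi^\dagger$ (resp.\ $\sigma = D\rho D^\dagger$) exploiting cyclicity of the trace and the covariance of the reference basis, with reindexing of the finite sum for the permutation case. No gaps; the bookkeeping points you flag are exactly the ones the paper's computation handles.
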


\begin{proof}
Let $\delta(M):=\sup_{\rho}\frac{1}{2} \sum_{i=1}^2  \abs{\tr{M_i \rho} - \bra{i} \rho \ket{i}}$. 
Then
\begin{enumerate}[(a)]
	\item $\delta(\{\ketbra{i}{i}\}_{i=1}^2) = 0$, since
	\begin{equation*}
	\delta(\{\ketbra{i}{i}\}_{i=1}^2) = \sup_{\rho} \frac{1}{2} \sum_{i=1}^2 \abs{\bra{i} \rho \ket{i} - \bra{i} \rho \ket{i}} =0,
	\end{equation*}
	\item $\delta$ is convex, since for any measurements $M, M'$ and for all $\lambda \in [0,1]$,
	\begin{align*}
	&\delta \left(\lambda M + (1-\lambda)M' \right) \\
	= &\sup_{\rho } \frac{1}{2} \sum_{i=1}^2 \abs{\tr{\left(\lambda M_i + (1-\lambda)M_i'\right)  \rho} - \bra{i} \rho \ket{i}} \\
% 	= & \sup_{\rho } \frac{1}{2} \sum_{i=1}^2 \lvert \lambda \left( \tr{M_i\rho} -\bra{i} \rho \ket{i} \right) \\
% 	& \qquad + (1-\lambda)\left( \tr{M_i'\rho} -\bra{i} \rho \ket{i}  \right) \rvert \\
	\leq &  \lambda  \sup_{\rho } \frac{1}{2} \sum_{i=1}^2 \abs{ \tr{M_i\rho} -\bra{i} \rho \ket{i} }  \\
	& \qquad +  (1-\lambda) \sup_{\rho} \frac{1}{2} \sum_{i=1}^2 \abs{ \left( \tr{M_i'\rho} -\bra{i} \rho \ket{i}  \right)} \\
	= & \lambda \delta(M) + (1-\lambda) \delta(M'),
	\end{align*}
	\item $\delta$ is permutation invariant, since for every permutation $\pi$ and any measurement $M$
	\begin{align*}
	&\delta \left( \{ U_\pi^\dagger M_{\pi(i)} U_\pi \}_{i=1}^2 \right) \\
	= & \sup_{\rho} \frac{1}{2} \sum_{i=1}^2 \abs{\tr{U_\pi^\dagger M_{\pi(i)} U_\pi  \rho} - \bra{i} \rho \ket{i}} \\
% 	= & \sup_{\rho} \frac{1}{2} \sum_{i=1}^2 \abs{\tr{ M_{\pi(i)} U_\pi  \rho U_\pi^\dagger} - \bra{i} \rho \ket{i}} \\
% 	= & \sup_{\rho} \frac{1}{2} \sum_{i=1}^2 \abs{\tr{ M_{\pi(i)}   \rho } - \bra{i} U_\pi^\dagger \rho U_\pi \ket{i}} \\
	= &\sup_{\rho} \frac{1}{2} \sum_{i=1}^2 \abs{\tr{ M_{\pi(i)}   \rho } - \bra{\pi(i)} \rho  \ket{\pi(i)}} \\
	= & \sup_{\rho} \frac{1}{2} \sum_{i=1}^2 \abs{\tr{ M_{i}   \rho } - \bra{i} \rho  \ket{i}} \\
	= &\delta \left(\{ M_i  \}_{i=1}^2\right),
	\end{align*}
	where $U_\pi$ is the permutation matrix that acts as $U_\pi \ket{i} = \ket{\pi(i)}$, and
	\item $\delta$ is invariant under diagonal unitaries, since for every diagonal unitary $D$ and any measurement $M$
	\begin{align*}
	&\delta \left( \{ D^\dagger M_{i} D \}_{i=1}^2 \right) \\
	= & \sup_{\rho} \frac{1}{2} \sum_{i=1}^2 \abs{\tr{D^\dagger M_{i} D \rho} - \bra{i} \rho \ket{i}} \\
	= & \sup_{\rho} \frac{1}{2} \sum_{i=1}^2  \abs{\tr{M_{i} \rho} - \bra{i} D^\dagger \rho  D\ket{i}} \\
	= &\sup_{\rho} \frac{1}{2} \sum_{i=1}^2 \abs{\tr{M_{i} \rho} - \bra{i}  \rho  \ket{i}} \\	
	= &\delta \left(\{ M_i  \}_{i=1}^2\right).
	\end{align*}
\end{enumerate}
\end{proof}

\begin{lem}
\label{lem:prop2}
$\Delta$ as defined in Eq.~(\ref{eq:Dist}) satisfies the following properties:
\begin{enumerate}[(a)]
	\item $\Delta(T_{\rm id}) = 0$,
	\item $\Delta$ is convex, 
	\item $\Delta$ is basis-independent, i.e., for every unitary $U$ and every quantum channel $\Phi$
	\begin{equation}
	\Delta \left( U \Phi \left( U^\dagger \cdot U \right) U^\dagger \right) = \Delta\left( \Phi \right). \nonumber
	\end{equation}
\end{enumerate}
\end{lem}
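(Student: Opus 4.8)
The plan is to verify (a)--(c) directly from the definition $\Delta(\Phi)=\frac{1}{2}\sup_\rho\norm{\Phi(\rho)-\rho}_1$, using only elementary properties of the trace norm; each part mirrors the corresponding argument for $\delta$ in Lemma~\ref{lem:prop1}. Part (a) is immediate: $T_{\rm id}(\rho)-\rho=0$ for every state $\rho$, so the quantity inside the supremum vanishes identically and $\Delta(T_{\rm id})=0$. For part (b) I would fix channels $\Phi,\Phi'$ and $\lambda\in[0,1]$ and write the difference at a fixed state as $(\lambda\Phi+(1-\lambda)\Phi')(\rho)-\rho=\lambda(\Phi(\rho)-\rho)+(1-\lambda)(\Phi'(\rho)-\rho)$, using $\rho=\lambda\rho+(1-\lambda)\rho$. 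The triangle inequality for $\norm{\cdot}_1$ then bounds the trace norm of this expression by $\lambda\norm{\Phi(\rho)-\rho}_1+(1-\lambda)\norm{\Phi'(\rho)-\rho}_1$, and taking the supremum over $\rho$ (using that the supremum of a sum is at most the sum of the suprema) gives $\Delta(\lambda\Phi+(1-\lambda)\Phi')\le\lambda\Delta(\Phi)+(1-\lambda)\Delta(\Phi')$.

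For part (c), the essential ingredients are the unitary invariance of the trace norm, $\norm{UAU^\dagger}_1=\norm{A}_1$, and the fact that $\rho\mapsto U^\dagger\rho U$ is a bijection of the set of density matrices. Writing the conjugated channel as $\Phi_U(\rho):=U\Phi(U^\dagger\rho U)U^\dagger$, I would factor
\[
\Phi_U(\rho)-\rho=U\left(\Phi(U^\dagger\rho U)-U^\dagger\rho U\right)U^\dagger,
\]
since $UU^\dagger=\1$. Unitary invariance removes the outer conjugation, leaving $\norm{\Phi(\sigma)-\sigma}_1$ with $\sigma=U^\dagger\rho U$. As $\sigma$ ranges over all states precisely when $\rho$ does, the supremum is unchanged and hence $\Delta(\Phi_U)=\Delta(\Phi)$.

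I do not expect a genuinely hard step here: the entire lemma reduces to the triangle inequality and the unitary invariance of $\norm{\cdot}_1$. The only points that warrant a moment's care are the subadditivity of the supremum invoked in (b) and the bijectivity of the state substitution used in (c), both of which are routine.
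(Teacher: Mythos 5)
Your proposal is correct and follows essentially the same route as the paper's proof: part (a) by direct evaluation, part (b) via the triangle inequality for $\norm{\cdot}_1$ together with subadditivity of the supremum, and part (c) via unitary invariance of the trace norm combined with the substitution $\sigma = U^\dagger \rho U$, which is a bijection on states. The only difference is cosmetic — the paper reparametrizes the supremum before stripping the unitaries, whereas you factor the conjugation out first — but the underlying argument is identical.
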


\begin{proof}
Let $\Delta (\Phi) := \frac{1}{2} \sup_{\rho} \norm{\Phi(\rho) -\rho}_1$. Then
\begin{enumerate}[(a)]
	\item $\Delta(T_{\rm id}) = 0$, since $\Delta (T_{\rm id}) = \frac{1}{2} \sup_{\rho} \norm{\rho -\rho}_1 =0$,
	\item $\Delta$ is convex, since for any quantum channels $\Phi, \Phi'$ and for all $\lambda \in [0,1]$,
	\begin{align*}
	&\Delta\left( \lambda \Phi +(1-\lambda)\Phi' \right) \\
	=  &\frac{1}{2} \sup_{\rho} \norm{\left( \lambda \Phi +(1-\lambda)\Phi' \right)(\rho) -\rho}_1 \\
	= &\frac{1}{2} \sup_{\rho} \norm{ \lambda \left(\Phi(\rho) -\rho \right) +(1-\lambda) \left( \Phi'(\rho) -\rho \right) }_1 \\
	\leq & \lambda \frac{1}{2} \sup_{\rho} \norm{ \Phi(\rho) -\rho }_1	+ (1-\lambda) \frac{1}{2} \sup_{\rho} \norm{ \Phi'(\rho) -\rho }_1 \\
	= & \lambda \Delta(\Phi) + (1-\lambda) \Delta (\Phi'),
	\end{align*}
	where we have used properties of a norm and properties of a supremum of a convex functional over a convex set,
	\item $\Delta$ is basis-independent, i.e., for every unitary $U$ and every quantum channel $\Phi$
	\begin{align*}
	&\Delta \left( U \Phi \left( U^\dagger \rho U \right) U^\dagger \right) \\
	= &\frac{1}{2} \sup_{\rho} \norm{U \Phi \left( U^\dagger \rho U \right) U^\dagger -\rho}_1 \\
	= &\frac{1}{2} \sup_{\rho} \norm{U \Phi \left(  \rho\right) U^\dagger - U \rho U^\dagger}_1 \\
	= &\frac{1}{2} \sup_{\rho} \norm{ \Phi \left(  \rho\right)  -  \rho }_1 \\	
	= &\Delta\left( \Phi \right),
	\end{align*}
	where we have used the fact that the trace norm is unitarily invariant. 
\end{enumerate}
\end{proof}

\section*{SM\,5: Different measures}
The optimal instruments as explained in the main text and derived in Sec.~\ref{sec:proofOptimalTradeoff} result in optimal measurement-disturbance relations for all distance measures which satisfy the assumptions of \cite{Hashagen_Wolf_2018}.
For more details on the distance measure used in the main text see Sec.~\ref{sec:distanceProperties}.
\begin{figure}[th]
\includegraphics[width=0.48\textwidth]{\graphicsPath 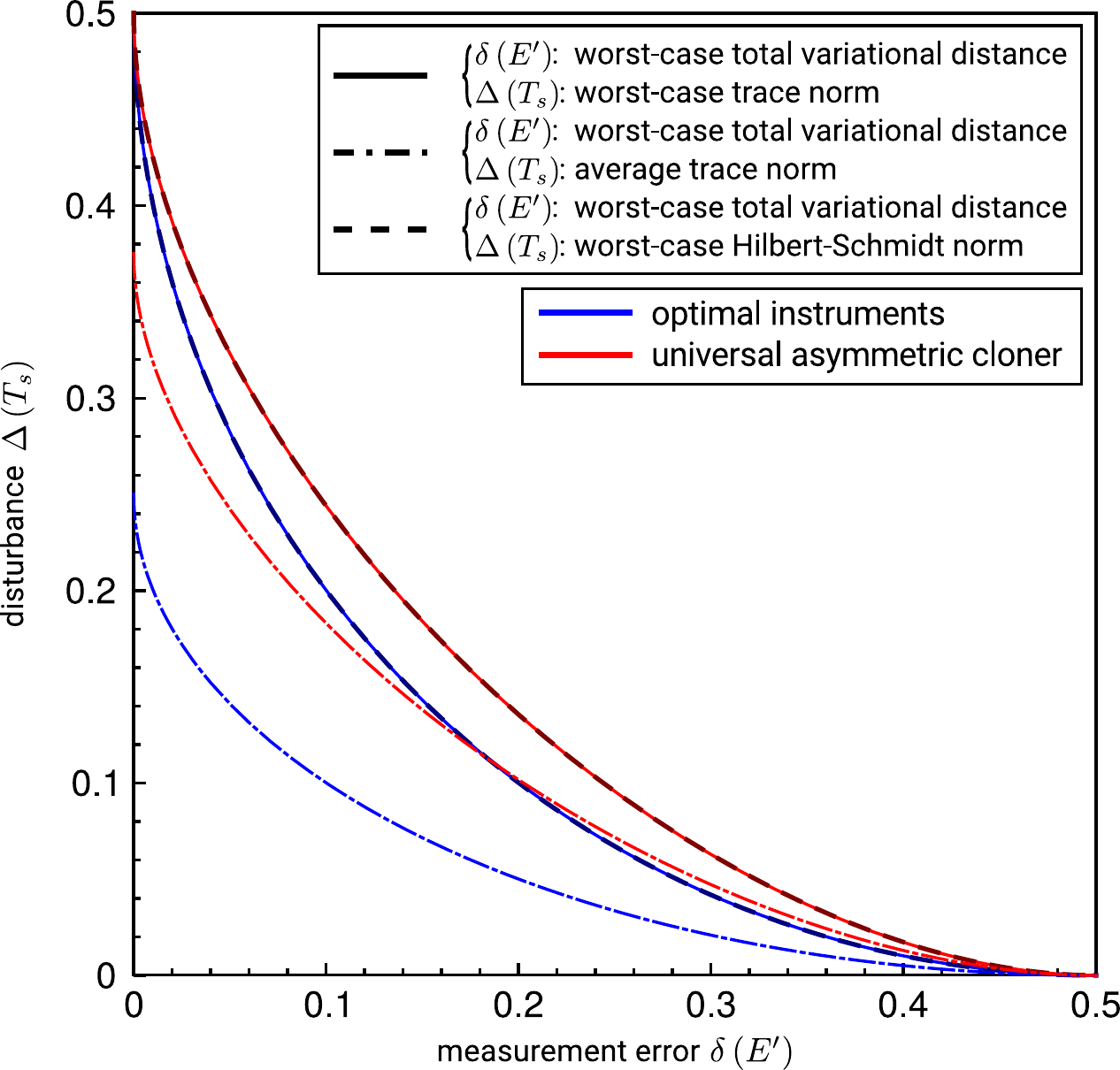}
\caption{
Comparison of optimal quantum instruments (blue) with the optimal universal asymmetric quantum cloner (red) for different distance measures based on simulations.
The tradeoff relation of the main text based on the measures of Eqs.~(\ref{eq:MeasError1}) and (\ref{eq:Dist}) is shown (solid lines) and equivalent to a properly scaled version of the worst-case Hilbert-Schmidt norm (overlayed dashed lines) and to the worst-case infidelity (not shown).
For averaging over all quantum states instead of taking the supremum of the trace norm for the disturbance, one obtains the dashdotted lines.
% (dashdotted lines) The tradeoff relation of the main text based on the measures of Eqs.~(\ref{eq:MeasError1}) and (\ref{eq:Dist}) is shown. 
% (dotted lines) 
% If instead of worst-case norms, the average over all (pure) states is taken, the optimal cloner not even reaches the same disturbance for minimal measurement error.
% The average is taken over the total variational distance and the trace norm, respectively.
% (squares) %The measurement error is quantified by means of the worst-case total variational distance. 
% The disturbance measure is $\Delta\left(T_s\right)\equiv0$ and thus independent of the actual channel. 
% Obviously, with this trivial choice, the cloner performs equally well as the optimal instruments.
}
\label{fig:differentMeasures}
\end{figure}

% If other measures are chosen, the tradeoff relation for the optimal instruments, Eq.~(\ref{eq:Tradeoff}), may change as well as the tradeoff for the optimal asymmetric cloner, Eq.~(\ref{eq:CloningTradeoff}). \todo{fideli\-ty? infidelity performs equally as other worst cases. see figure caption} 
We here show the tradeoff relations for different choices of disturbance measures, while the measurement error is always quantified as in Eq.~(\ref{eq:MeasError1}).
For various meaningful measures, we observe that the optimal instruments outperform the cloner, see Fig.~\ref{fig:differentMeasures}.
% However, for example, for the trivial choice of the disturbance, $\Delta\equiv0$, which still is in agreement with the above mentioned requirements, the cloner may perform equally well.
% In Fig.~\ref{fig:differentMeasures}, different choices of distance measures are compared based on a numerical study.

\section*{SM\,6: Experimental setup}

Due to experimental and practical limitations, the actual experimental setup has been slightly different than described in the main text.
However, the actual implementation is fully equivalent to the description there.
In order to be able to fully tune the attenuation in one of the interferometer arms, we use a half waveplate (HWP) sandwiched between two polarizers.
Therefore, the polarization state $\rho$ cannot be set before.
Hence, we decided to first create the spatial superposition state $\ket{\phi_0}$ using waveplates and polarizers and subsequently set $\rho$ in both interferometer arms separately.
With this approach, we still achieve at this stage a separable state $\rho\otimes\ketbra{\phi_0}{\phi_0}$ within the interferometer before the interaction.
As we set the polarization state directly in front of the second beam splitter of the interferometer, the reflection of beam $A$ on the beam splitter already provides the interaction between system and auxiliary system.
This reflection induces the unitary transformation $U$ as described in the main text, enabling us to obtain the Kraus operators given in Eq.~(\ref{eq:KrausOperators}).

Since for a perfect beam splitter the output ports are interchanged for $\varphi_0\leftrightarrow\varphi_0+\pi$, we use only output port $C$ to obtain data for both projections, considering the phases $\varphi_1 = \varphi_0$ and $\varphi_2 = \varphi_0 + \pi$.
This way, both projections are carried out with exactly the same equipment, reducing possible experimental errors.

\begin{figure}[t]
\includegraphics[width=0.47\textwidth]{\graphicsPath 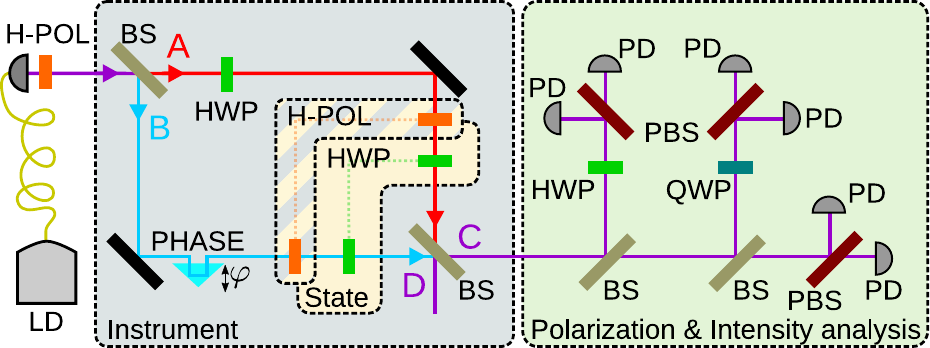}
\caption{{Actual experimental setup.} 
% By a half waveplate (HWP), a linear polarization state $\rho$ is prepared and sent to a beam splitter (BS).
Light from a diode laser (LD) propagates through a single mode fiber and is sent through a fixed polarizer (H-POL).
A beam splitter (BS) creates a spatial superposition.
% The spatial superposition state $\ket{\phi_0}$ can be tuned by varying the optical attenuation (VOA) and changing the relative phase $\varphi$.
The attenuation of one arm can be adjusted using a half waveplate (HWP) in arm $A$ and another H-POL.
The relative phase $\varphi$ can be varied using a piezo controlled prism. 
H-POLs together with variable HWPs ensure equal polarization in both arms as indicated by the dotted lines.
% For the measurement, a coupling between one path and the polarization has to be achieved, which is schematically shown here with a birefringent crystal (BC).
As the H-POLs are used to vary the attenuation as well as to set the polarization state, they are part of both the instrument and the state preparation.
The reflection from arm $A$ on the second BS introduces a coupling between polarization and path.
Polarization and intensity measurements are performed in output port $C$ using waveplates (HWP and QWP), polarizing beam splitters (PBS) and photodiodes (PD).
Output port $D$ is not monitored, as for phase $\varphi_0$ it is redundant to the output of port $C$ at phase $\varphi_0+\pi$.
}
\label{fig:ActualExpsetup}
\end{figure}

\section*{SM\,7: Choice of polarization states}
\label{sec:choiceOfStates}
According to the parametrization %of Eq.~(\ref{eq:polarizationstate})
%\begin{equation}
$\ket{\psi} = \cos\frac{\theta}{2} \ket{H} + \sin\frac{\theta}{2} \ket{V}$, 
%\end{equation}
the experimentally prepared values for $\theta$ were $\{-20^\circ$, $-10^\circ$, $0^\circ$, $10^\circ$, $20^\circ$, $70^\circ$, $80^\circ$, $90^\circ$, $100^\circ$, $110^\circ$, $160^\circ$, $170^\circ$, $180^\circ$, $190^\circ$, $200^\circ$, $270^\circ\}$. 
%$\{-10^\circ$, $-5^\circ$, $0^\circ$, $5^\circ$, $10^\circ$, $35^\circ$, $40^\circ$, $45^\circ$, $50^\circ$, $55^\circ$, $80^\circ$, $85^\circ$, $90^\circ$, $95^\circ$, $100^\circ$, $135^\circ\}$ \todo{does not fit figure 5}. 
For $\theta=0^\circ$ and $\theta=180^\circ$, the prepared state corresponds to horizontal polarization $\ket{H}$ and vertical polarization $\ket{V}$, respectively.
Thus, the reflection in beam $A$ only introduces a phase, as for example the state for $\theta=0^\circ$ is transformed according to
\begin{align}
\ket{H}&\otimes\left(\cos\alpha\ket{A}+\sin\alpha e^{i\varphi}\ket{B}\right)\rightarrow \nonumber \\
\ket{H}&\otimes\left(i\cos\alpha\ket{A}+\sin\alpha e^{i\varphi}\ket{B}\right),
\end{align}
which does not change the state of the polarization.
The disturbance therefore (ideally) vanishes.
In contrast, for $\theta=90^\circ$, we expect 
\begin{align}
\left(\ket{H}+\ket{V}\right)&\otimes\left(\cos\alpha\ket{A}+\sin\alpha e^{i\varphi}\ket{B}\right)\rightarrow \nonumber \\
i\left(\ket{H}-\ket{V}\right)&\otimes\cos\alpha\ket{A}+\left(\ket{H}+\ket{V}\right)\otimes\sin\alpha e^{i\varphi}\ket{B},
\end{align}
where normalization is omitted.
For a given instrument characterized by $\{\alpha,\varphi\}$, this polarization state is expected to give the largest disturbance $\Delta$.

For the Kraus operators given in Eq.~(\ref{eq:KrausOperators}), we find for $E_j^\prime=K_j^\dagger K_j$ for $j=1,2$,
\begin{equation}
E_{1,2}^\prime = \frac{1}{2} \begin{pmatrix}
1 \pm \sin 2\alpha\cos\varphi & 0 \\
0 & 1 \mp \sin 2\alpha\cos\varphi
\end{pmatrix}.
\end{equation}
Therefore, the distance of the outcome probabilities, used to obtain $\delta$, becomes
\begin{align}
\frac{1}{2}\sum_i\left|\tr{E_i^\prime\ketbra{\psi}{\psi}}-\left|\braket{i}{\psi}\right|^2\right|=\nonumber\\
\left|\cos\theta\left(1-\cos\varphi\sin2\alpha\right)\right|,
\end{align}
which vanishes for $\theta=90^\circ$ (and $\theta=270^\circ$) and can be maximal for $\theta=0^\circ$ (and $\theta=180^\circ$).

% \section{Comparison with Englert–Greenberger–Yasin duality relation}
% ...

\section*{SM\,8: Error analysis of experimental data}
The statistical error of the data shown in Fig.~\ref{fig:dataplotInstruments} is estimated by comparing the results obtained in redundant measurements.
%Statistical error estimated by comparing result for phase $\phi_0$ and $\phi_0+\pi$.
The standard deviation of the measurement error is estimated to be around $8.3\cdot10^{-5}$, whereas the $1\sigma$-error bar for the estimated disturbance is approximately $7.0\cdot10^{-5}$. 
% \todo{check numbers}
Those values are thus too small to be visible in Fig.~\ref{fig:dataplotInstruments}.

Additionally to statistical errors, two different sources of systematic errors have been identified. 
First, the state preparation as well as the interaction are not perfectly implemented.  
The imperfect preparation of the initial polarization state and of the state analysis are the main reasons that the identity channel with no disturbance at all (but high measurement error) cannot be implemented perfectly, leading to a residual disturbance, which appears as an increase of the minimal disturbance $\Delta$ of the data in the plot.
In any case, this type of error only reduces the quality of the prepared quantum instruments and does not lead to faulty conclusions.

However, as a second type of systematic error one has to ensure that the prepared polarization states are describing a great circle on the Bloch sphere and contain the states with extremal results sufficiently well.
This error can be approximated by considering the data as shown in Fig.~\ref{fig:dataplot}.
By applying a parabolic model for the data points around the extrema of the probability graphs and the maxima of the trace distance graphs, the deviation of the extrema from the measured points can be estimated.
This effect might cause a quantum instrument to look better than it actually is, i.e., less disturbing together with smaller measurement error.
Yet, for the dataset shown in Fig.~\ref{fig:dataplot}~b), the parabolic fit results in a maximum at $\theta\approx89.95^\circ$ with a trace distance larger by only $0.02\%$ compared to the trace distance at  $\theta=90^\circ$.
The probabilities in Fig.~\ref{fig:dataplot}~a) around $\theta=0^\circ$ and $\theta=180^\circ$ can nicely be described by parabolae, where the extrema coincide with our measured points.
Thus, the systematic effect of underestimating the measurement error or the disturbance due to badly chosen measurement states is negligibly small.

In conclusion, the different sources of errors overall reduce the quality of the implemented quantum instruments and do not lead to an underestimation of disturbance and measurement error, respectively.
We can thus show the implementation of instruments better than the optimal quantum cloner with high significance.

\else % dont show SM
\fi % ifdefined mainonly

\ifdefined\showMain
\else

\onecolumngrid 

\fi % showMain

\end{document}